\theoremstyle{theorem}
\newtheorem{theorem}{Theorem}
\newtheorem{lemma}{Lemma}
\newtheorem{corollary}{Corollary}
\theoremstyle{remark}
\theoremstyle{proof}
\def\be{\begin{equation}}
\def\ee{\end{equation}}
\def\ba{\begin{eqnarray}}
\def\ea{\end{eqnarray}}
\def\h{\hskip 1cm }
\def\hh{\hskip 2cm}
\def\la{\langle}
\def\ra{\rangle}
\def\a{\alpha}
\def\m{\mu}
\def\x{{\bf x}}
\def\y{{\bf y}}
\def\z{{\bf z}}
\def\la{\langle}
\def\ra{\rangle}
\newcommand{\abs}[1]{\left\lvert{#1}\right\rvert}
\newcommand{\norm}[1]{\left\lVert{#1}\right\rVert}
\newcommand{\ket}[1]{|{#1}\rangle}
\newcommand{\bra}[1]{\langle{#1}|}
\newcommand{\braket}[2]{\langle{#1}|{#2}\rangle}
\begin{document}

\begin{center}
{\Large \bf Time independent quantum circuits with local interactions}\\
\vspace{1cm} 
Sahand  Seifnashri\footnote{sahand.seifn@gmail.com}, Farzad  Kianvash\footnote{farzadkianvash@gmail.com}, Jahangir Nobakht\footnote{jahangir.n.b@gmail.com}\\  and\\  Vahid  Karimipour\footnote{vahid@sharif.edu}

\vspace{5mm}

\vspace{1cm} Department of Physics, Sharif University of Technology,\\
P.O. Box 11155-9161,\\ Tehran, Iran
\end{center}
\vskip 3cm


	\begin{abstract}
		
	Heisenberg spin chains can act as quantum wires transferring quantum states either perfectly or with high fidelity. Gaussian packets of excitations passing through dual rails can encode the two states of a logical qubit, depending on which rail is empty and which rail is carrying the packet. With extra interactions in one  or between different chains,  one can introduce interaction zones in arrays of such chains, where specific one or two qubit gates act on any qubit which passes through these interaction zones. Therefore, universal quantum computation is made possible in a static way where no external control is needed. This scheme will then pave the way for a scalable way of quantum computation where specific hardware can be connected to make large quantum circuits. Our scheme is an improvement of a recent scheme where we have achieved to borrow an idea  from quantum electrodynamics to replace non-local interactions between spin chains with local interactions mediated by an ancillary chain.  
			
	\end{abstract}
	PACS numbers: 03.65.Aa, 03.67.Ac, 03.67.Hk\\
	
	\section{Introduction}



	The circuit model of quantum computation, which is the oldest and the most well studied model of quantum computation, is very similar to the classical model of computation, its basic features are that quantum circuits are drawn as horizontal lines representing flow of qubits on which, instead of classical gates unitary quantum operators are acting. In the same way that the elementary classical gates like AND, OR and NOT can be joined in various ways to implement any Boolean function on $n$ bits, in quantum circuits, a universal set of one and two-qubit unitary gates can be joined in a suitable way to produce any unitary gate to any desired level of  accuracy.  This similarity is seen in any diagram of quantum circuits, like the one shown in figure (\ref{fig:CQCircuit}). However, the similarity stops here and indeed there is a world of difference between the two models. \\

	\begin{figure}[t]
		\centering
		\includegraphics[width=18cm, height=12cm]{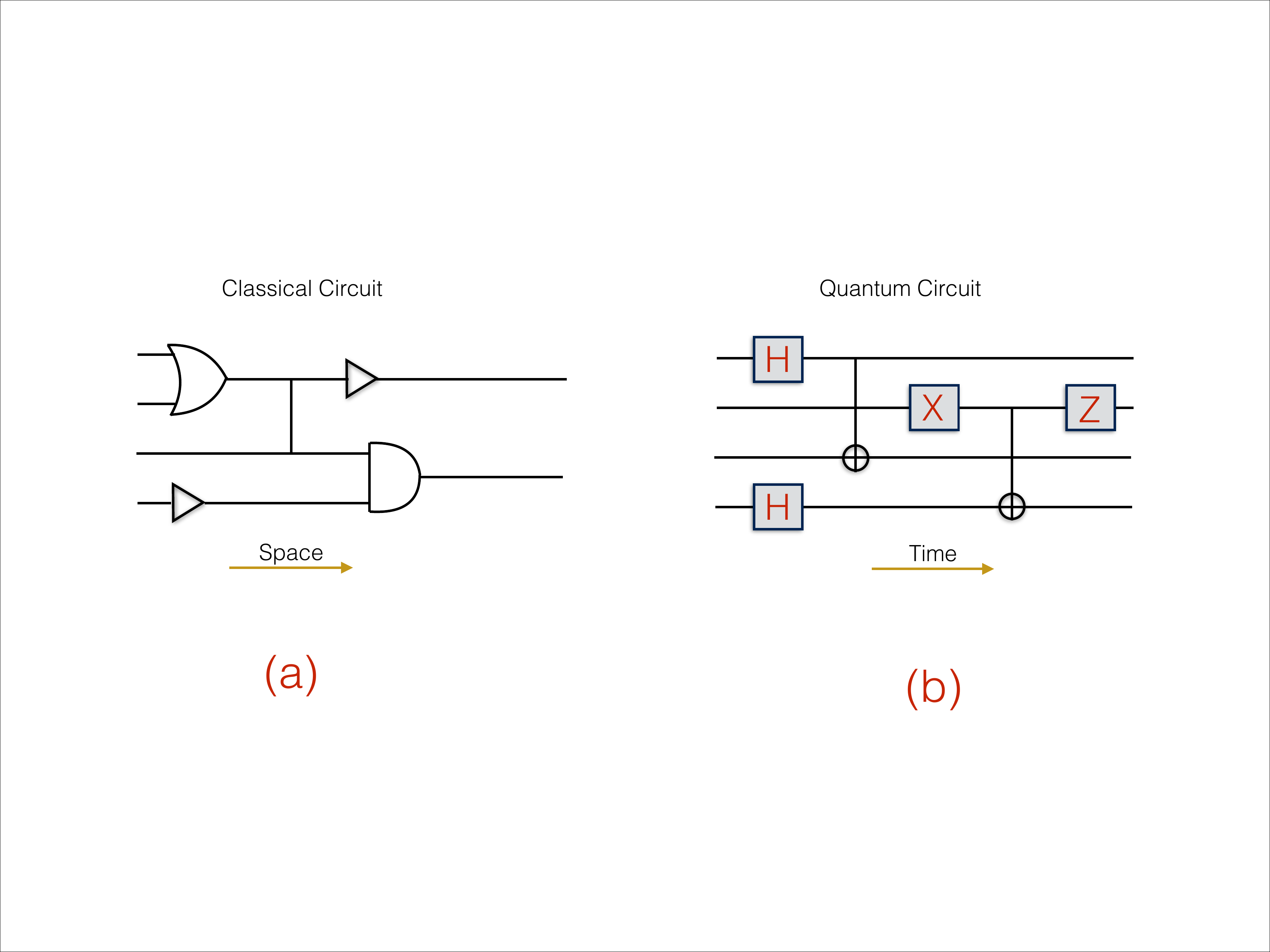}
		\caption{(Color Online) In a classical circuit, horizontal lines corresponds to wires and the gates to localized operations fixed in space acting on bits flowing through wires.  In a quantum circuit horizontal lines correspond to flow of time and gates correspond to localized operations  in time, acting on qubits in their fixed place. }
		\label{fig:CQCircuit}
	\end{figure}
	
	In figure \ref{fig:CQCircuit} (a), horizontal lines actually depict wires, and the whole diagram is a static template which shows how different gates which are well localized and '' fixed-in-space",  act on the logical values of the input bits, once they are fed into the circuit. On the other hand, in figure \ref{fig:CQCircuit} (b) there is no wire  at all and horizontal lines and the gates just display when and on which qubit a unitary gate should be applied "in time". The quantum circuit is not a static hardware template, but is more or less the same as the quantum algorithm itself and a great deal of external control should be applied in time  in order to run this circuit or algorithm. In a classical circuit, bits of information (small electrical currents or bunches of electrons) move down the lines and are acted on automatically by the classical gates on their way, while in  a quantum circuit, qubits of information (electrons, photons, spins, etc) stay in their place, while an external controller, applies different quantum gates on them in a particular order described by the quantum circuit. \\

	Of course there are schemes of quantum computation, e.g. optical realizations,  in which "flying" qubits carry information and are acted on by localized and " fixed-in-space" optical elements. However, the drawbacks of such schemes is the weak interactions between photons in non-linear optical realizations, or the probabilistic nature of gate teleportation in linear optical quantum computing \cite{optics}.  One has also to design mechanisms for converting “flying” qubits  into “stationary” qubits of other types which can interact strongly  \cite{flying} and transfer information over small distances. There are other schemes which are much easier to control and manipulate, like NMR \cite{NMR}, but are generally known not be scalable.\\
	
	 Intensive study of spin chains as quantum wires in the past decade \cite{bose1, ben, bayat, gong, ekert, ksa, woj, ak, woj, ak, jex, kay, ksa, woj, ak, jex, kay, ksa, pst1,pst2,pst3,pst4,pst5} has revealed an interesting synthesis between these two demands. The very appealing property of this solution is that it allows one to 	
	  directly communicate the information, without converting it to another form. Of particular interest are those schemes where information can be routed either passively, with no control on the overall system, or control only over a small portion of the system \cite{dual, Linden, BB, woj, ekert, pst2}, or only global external control over the entire system without any individual addressing \cite{kay, ksa}. \\
	
	Therefore, it is  quite tempting, both theoretically and practically to design schemes where  wires can be arrays of Heisenberg spin chains, and qubits can be excitations which flow down these chains \cite{bose1, ben, bayat, gong, ekert, ksa, woj, ak, jex, kay, ksa}. These chains can be joined to each other and in certain interaction zones, their interaction can be such that when excitations pass through these zones, specific one or two qubit gates act on them, without any external control. The emphasis is here on the absence of "external control" which makes these exactly as classical circuits.  \\

	The most recent attempt in this direction which is the culmination of a long series of investigations \cite{quantum walk, bose, plenio, Twamley, Janzing, Whaley} is reported in \cite{Shor et al.}. In this scheme a pair of periodic Heisenberg spin chains with XY interactions on each of them play the role of a single quantum wire, where a dual rail encoding is used to encode a single qubit, figure (\ref{Dual Rail}). 		
	Single qubit gates are implemented by "rewiring" parts of these chains, in the sense that new types of interactions are imposed on parts of these chains. These will be briefly reviewed in the sequel. As is well-known, universal quantum computation requires in addition to these single qubit gates, the ability to implement also an entangling gate, like the CNOT or CPHASE gate which acts on two qubits \cite{chuang}. In \cite{Shor et al.} this is solved by imposing a non-local interactions between two chains, figure \ref{fig:CPhase} (a).  Certainly this feature is an important hindrance in  experimental realizations of this scheme. \\
	
	This is where our work in this paper is motivated. We want to replace this non-local interaction by a local one and to this end we borrow ideas from quantum electrodynamics, where the long-range interaction between two charged particles is mediated by a photon which interacts locally by each of the charged particles. Therefore, in places where we require a CPhase gate, we add a  third chain between the two chains, figure \ref{fig:CPhase} (b), which mediates the long-range interaction between the two chains by locally interacting with each of them. Note that as shown in \cite{Shor et al.}, the size of the interaction zones in each of the chains and between the chains need to be greater than the width of the Gaussian packets in order not to having appreciable errors. For details of error analysis see \cite{Shor et al.}. The final result is that we now have a set of chains with local interaction which acts like a static quantum circuit. This static quantum circuit is then capable of doing universal quantum computation in the same way that classical circuits can do universal classical computation.\\  
			
	\begin{figure}[t]
		\centering
		\includegraphics[width=14cm, height=10cm]{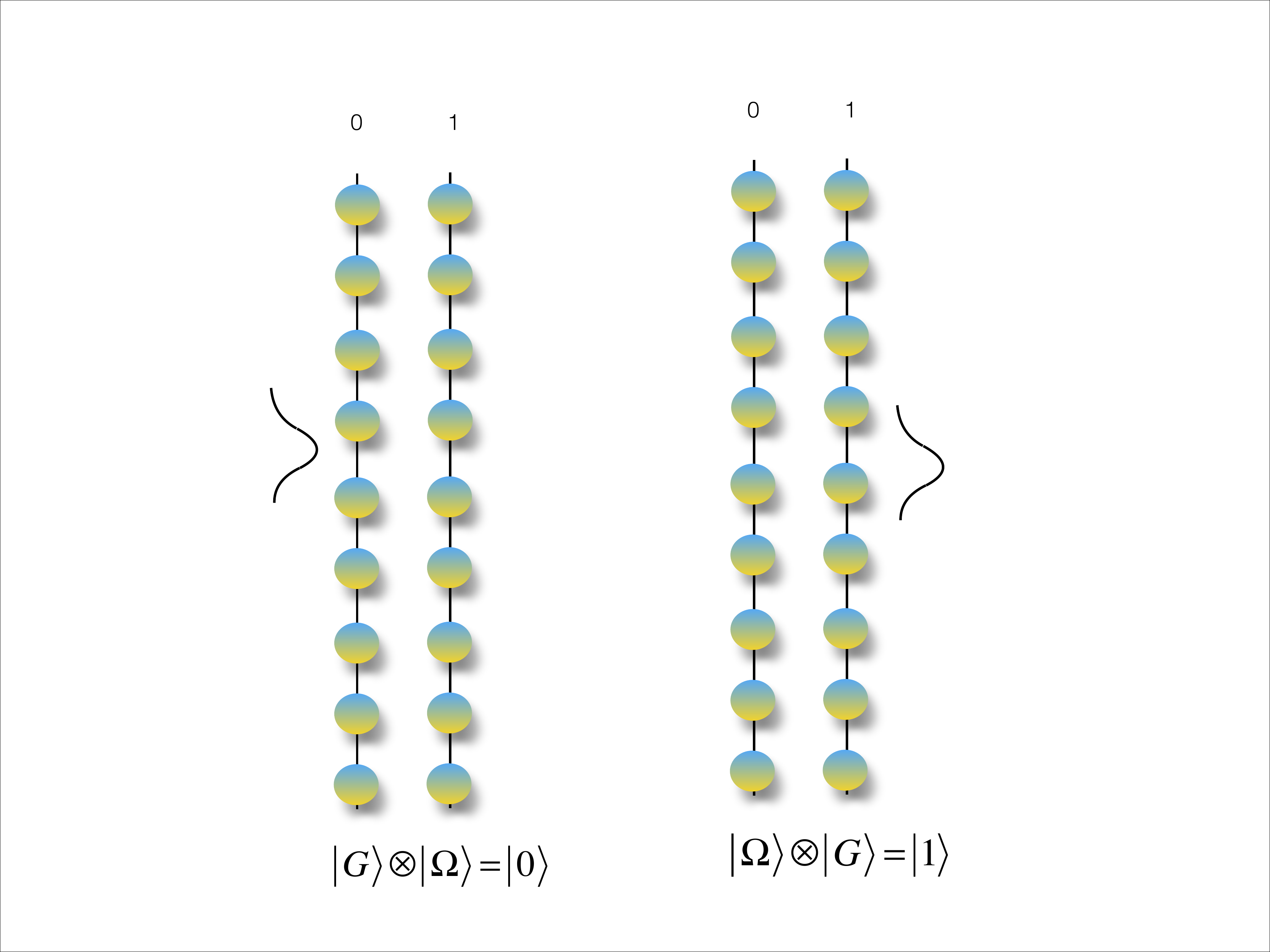}
		\caption{(Color Online) Dual rail encoding for a qubit. The states of the two chains is a logical $|{\bf 0}\ra$ or a logical $|{\bf 1}\ra$, depending on which of the two XY chains is in the vacuum state $|\Omega\ra$ and which one carries a Gaussian packet $|G\ra$.}
		\label{Dual Rail}
	\end{figure}

	The structure of this paper is as follows: In section (\ref{background}), we review in some detail the basic elements of the static quantum circuit of \cite{Shor et al.}, or as they call it Quantum Plinko Machine. In section (\ref{model}), we show how the long-range interactions necessary for the CPhase gate can be implemented by using a third chain which acts as a gauge particle or photon and mediates this long range interaction by local interactions. In fact, we show that a different but still entangling gate can be made in this way. In section (\ref{exp}) we briefly review the experimental progress in realizing spin chains and their manipulation and control in a few systems, like cold atoms and quantum dots and argue that the ingredients needed for realization of  our scheme are at the edge of experimental feasibility within these schemes. Finally, we conclude the paper with a discussion and delegate some of the technical details to the appendices. \\
	
	\section{The static quantum circuit of \cite{Shor et al.}}\label{background}
	For the sake of completeness, in this section we briefly review the work of \cite{Shor et al.}. We do not go into the error analysis carried out in \cite{Shor et al.} and explain only the basic notions: XY chains and their wave packets, the encoding of logical qubits into these chains, the single qubit gates and the CPHASE gate. First we describe our notations.

	\subsection{Notations and conventions} We will be dealing with periodic spin chains consisting of $N$ spins, labeled from $0$ to $N-1$, where the $N$-th site is understood to be the same as site $0$. Pauli operators on the $j$-th spin are denoted by $\x_j, \y_j$ and $\z_j$, where $$\x=\left(
	\begin{array}{cc}
	0 & 1 \\
	1 & 0 \\
	\end{array}
	\right)\ ,\text{ }\y=\left(
	\begin{array}{cc}
	0 & -i \\
	i & 0 \\
	\end{array}
	\right) , \text{ and }\ \z=\left(
	\begin{array}{cc}
	1 & 0 \\
	0 & -1 \\
	\end{array}
	\right)$$ respectively. The Hilbert space of each individual spin is spanned by two states $|\uparrow\ra\equiv |0\ra=\left(\begin{array}{c}1\\ 0 \end{array}\right) $ and $|\downarrow\ra\equiv |1\ra=\left(\begin{array}{c}0\\ 1 \end{array}\right) $. When a chain is in the state of all up spins, i.e.  $|0,0,\cdots 0\ra$, we call it to be in the vacuum state and when a spin in the $j$-th position is down, i.e. $|0,0,\cdots 1, \cdots 0\ra$, we say that there is an excitation or a particle in the $j$-th position. The local operator ${\bf n}_j=\frac{1}{2}(\mathbb{I}-\z_j)=\left(\begin{array}{cc} 0 & 0 \\ 0 & 1\end{array}\right)_j$ detects an excitation in the $j$-th site of the chain and the operator $\hat{N}:=\sum_{j=0}^{N-1}{\bf n}_j$ counts the number of excitations in the whole chain. \\
			
	The configurations of pairs of spin chains, as described in section (\ref{encoding}), will encode logical qubits which will be denoted by bold face numbers inside kets, i.e. $|{\bf 0}\ra$
	and $|{\bf 1}\ra$. 	The Pauli operators on the logical qubits will be denoted by capital letters, 
	\be
		X=|{\bf 0}\ra\la {\bf 1}|+|{\bf 1}\ra\la {\bf 0}|,\hh Y=-i\ |{\bf 0}\ra\la {\bf 1}|+i \ |{\bf 1}\ra\la {\bf 0}|,\hh Z=|{\bf 0}\ra\la {\bf 0}|-|{\bf 1}\ra\la {\bf 1}|.
	\ee

	\subsection{The XY chain}
	Consider a periodic chain of spin 1/2 particles. States in the Hilbert space of a chain are written in the computational basis which have the form $\ket{a_0,a_1,...,a_{N-1}}$ where $a_i={0,1}$.  The Hamiltonian for the chain entails the well-known XY interaction:
	\begin{equation}\label{X-Y}
	H=\frac{1}{2} \sum _{j=0}^{N-1} \x_j \x_{j+1}+\y_{j} \y_{j+1},
	\end{equation}
	where subscript $j$ indicates the $j$-th site of the chain.\\
	This Hamiltonian commutes with $\textbf{S}^z_{total}= \sum _{j=0}^{N-1}\z_j$ so the eigenstates of the Hamiltonian have specific value of $\textbf{S}^z_{total}$. The subspace with $\textbf{S}^z_{total}=N$ is one dimensional which includes the so called vacuum state, 
	
	\be
	|\Omega\ra=\ket{0,0, ..., 0}
	\ee
	and the subspace with $\textbf{S}^z_{total}=N-2$ is called the single excitation subspace. This subspace is N dimensional and is spanned by $\ket{x}$, which is defined as $\ket{0, 0, ...0, 1, 0, ..., 0}$, where 1 is on the $x$-th site. Here we are interested in eigenstates of the Hamiltonian which are in the single excitation subspace, see appendix A. They are defined as
	\begin{equation}
	\ket{\tilde{p}}=\frac{1}{\sqrt{N}}\sum _{x=0}^{N-1}  e^{\frac{2\pi ipx}{N}}\ket{x}, \qquad \text{with energy:} \quad E=2 \cos( \frac{2 \pi p}{N}),
	\end{equation}
	where $p$ is an integer between $0$ to $N-1$.
	A Gaussian wave packet  $\ket{G}$ can be defined as \cite{Dragoman}
	\begin{figure}[t]
		\centering
		\includegraphics[width=14cm, height=10cm]{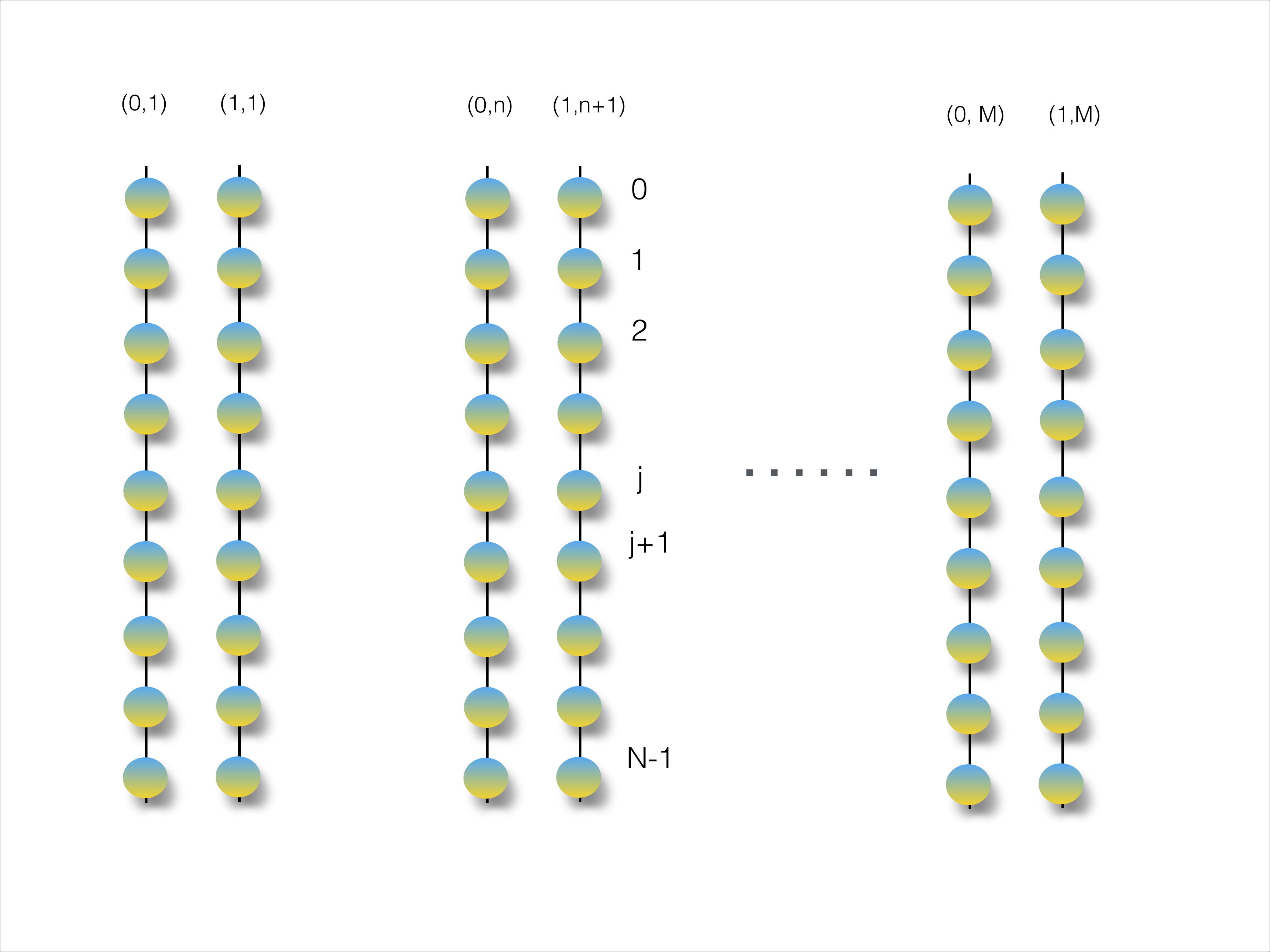}
		\caption{(Color Online) $M$ qubits correspond to $2M$ spin chains. As we will see, for the implementation of entangling gates, it is better to arrange the chains in two layers, the $0$ chains in one layer and the $1$ chains in another parallel layer.  }
		\label{fig:Basic1}
	\end{figure}
	\begin{align} \label{G packet}
		\ket{G}&=\frac{1}{\sqrt{\Delta x\sqrt{\pi }}}\sum _{x=0}^{N-1} \sum _{\alpha =-\infty }^{\infty }e^{\frac{2\pi ip_0x}{N}}  e^{-\frac{\left(\alpha  N+x-x_0\right){}^2}{2 {\Delta x}^2}}\ket{x}\cr
		&=\frac{1}{\sqrt{\Delta p\sqrt{\pi }}}\sum _{p=0}^{N-1} \sum _{\alpha =-\infty }^{\infty }e^{\frac{-2\pi ipx_0}{N}}  e^{-\frac{\left(\alpha  N+p-p_0\right){}^2}{2{\Delta p}^2}}\ket{\tilde{p}},
	\end{align}
	with $2\pi\Delta x \Delta p=N$, where $\Delta x$ and $\Delta p$ are respectively the widths of the packet in the position and momentum spaces.   The group velocity of these packets is given by   
	\begin{equation}
	v_g=\frac{N}{2\pi}\frac{dE}{dp}=-2\sin(\frac{2\pi p}{N}).
	\end{equation}
	By choosing $p=N/4$, the packet will propagate with group velocity $v_g=2$ and with no dispersion, since at $p=N/4$ the derivative of the group velocity (dispersion) is $0$ \cite{Linden, Shor et al.}.
	The unitary evolution of the system $U=e^{-iHt}$ acts as a translation operator on the wave packets which in time $t$ will translate their center from $x_0$ to $x_0+v_gt$ up to a small error \cite{Linden}. \\

	\subsection{The dual rail encoding}\label{encoding}
	In \cite{Shor et al.} the dual rail encoding is used to represent a qubit.  A pair of XY chains represent the $|{\bf 0}\ra$ and the $|{\bf 1}\ra$ states of a single qubit, when they are in the following states, figure (\ref{Dual Rail}): 
	\begin{equation}
		\ket{{\bf 0}} :=  \ket{G} \otimes \ket{\Omega}
	\end{equation}
	\begin{equation}
		\ket{{\bf 1}} :=  \ket{\Omega} \otimes \ket{G}
	\end{equation}
	From eq. (\ref{G packet}) we find that 
	\be
		(\mathbb{I}\otimes \hat{N})\ket{{\bf 0}}=0,\h \text{and}\h (\mathbb{I}\otimes \hat{N})\ket{{\bf 1}}=\ket{{\bf 1}},
	\ee
	Hence, the operator $\mathbb{I}\otimes \hat{N}$ acting on the dual rail detects whether the logical qubit is in the $|{\bf 0}\ra$ or the $|{\bf 1}\ra$ state. We will later use this fact in our local implementation of the entangling gate.  \\ 

	To represent $M$ qubits, $2M$ chains are used. The pair of chains corresponding to the $n$-th qubit are pointed to by the pair of indices $(0,n)$ and $(1,n)$, and all the chains are of equal length $N$, where the spins on each chain are numbers from $0$ to $N-1$, figure (\ref{fig:Basic1}). A Pauli operator like $\x$ which acts on the $j$-th spin of the chain $(a,n)$ ($a=0,1$) is denoted by $\x^{(a,n)}_j$. Thus the Hamiltonian for the free system which has no quantum gate in it is given by

	\begin{equation}
		H_\text{free}= \frac{1}{2} \sum _{j=0}^{N-1}\sum_{a=0,1}\sum_{n=1}^M {\x}^{(a,n)}_j {\x}^{(a,n)}_{j+1}+{\y}^{(a,n)}_j {\y}^{(a,n)}_{j+1}.
	\end{equation}
	Therefore, when there is no extra interaction between the spins, any Gaussian wave-packet flows down the chain without being acted on by any sort of gate. This  acts as a collection of $M$ simple wires of the quantum circuit. 

	\subsection{Single qubit gates}\label{xz gates}
	A single qubit gate acts on a single qubit, or a single wire (a pair of $XY$ chains). Therefore, for simplicity we can drop the extra superscript $n$  from the spin chains  and the operators (pointing to the label of the qubit) and write the free Hamiltonian for the two chains in the form

	\be\label{Hfree}
		H^\text{1-qubit}_\text{free}= \frac{1}{2} \sum _{j=0}^{N-1} \left(\x^0_j \x^0_{j+1}+\y^0_j \y^0_{j+1}+ \x^1_j \x^1_{j+1}+\y^1_j \y^1_{j+1}\right).
	\ee
	Any single qubit unitary can be decomposed, (up to an overall phase) in the form \cite{chuang} 
	\begin{equation}
		U=e^{i\theta_1Z}e^{i\theta_2X}e^{i\theta_3Z}
	\end{equation}
	where $\theta_i$ are real parameters $\in [0,\pi]$. This is described separately for the two types of gates $e^{i\theta Z}$ and $e^{i\theta X}$ in the following.  

	\subsubsection{The $e^{i\theta Z}$ gate}
	To implement this gate, the following interaction is added to the $1-$chain of  (\ref{Hfree}), \cite{Shor et al.}: 

	\begin{equation}
		H_Z=\phi \sum_{j=0}^{N-1} \, \frac{\mathbb{I}^1_j-\z^1_j}{2}
	\end{equation}
	This modified Hamiltonian is still diagonal in the momentum basis but this time with eigenvalues $E\left(\ket{\tilde{p}}\right) = 2 \cos( \frac{2\pi p}{N}) + \phi$ which are shifted by the amount $\phi$. It is now straightforward to see that with adding this extra term, the logical state $\ket{1}$  will gain an extra phase of $\phi$ per unit time, when it passes through the wire, while the logical qubit $|0\ra$ doesn't get any phase. Therefore, the gate $e^{i\phi t Z}$ is implemented on the qubit. \\
 
	\begin{figure}[t]
		\centering
		\includegraphics[width=14cm, height=10cm]{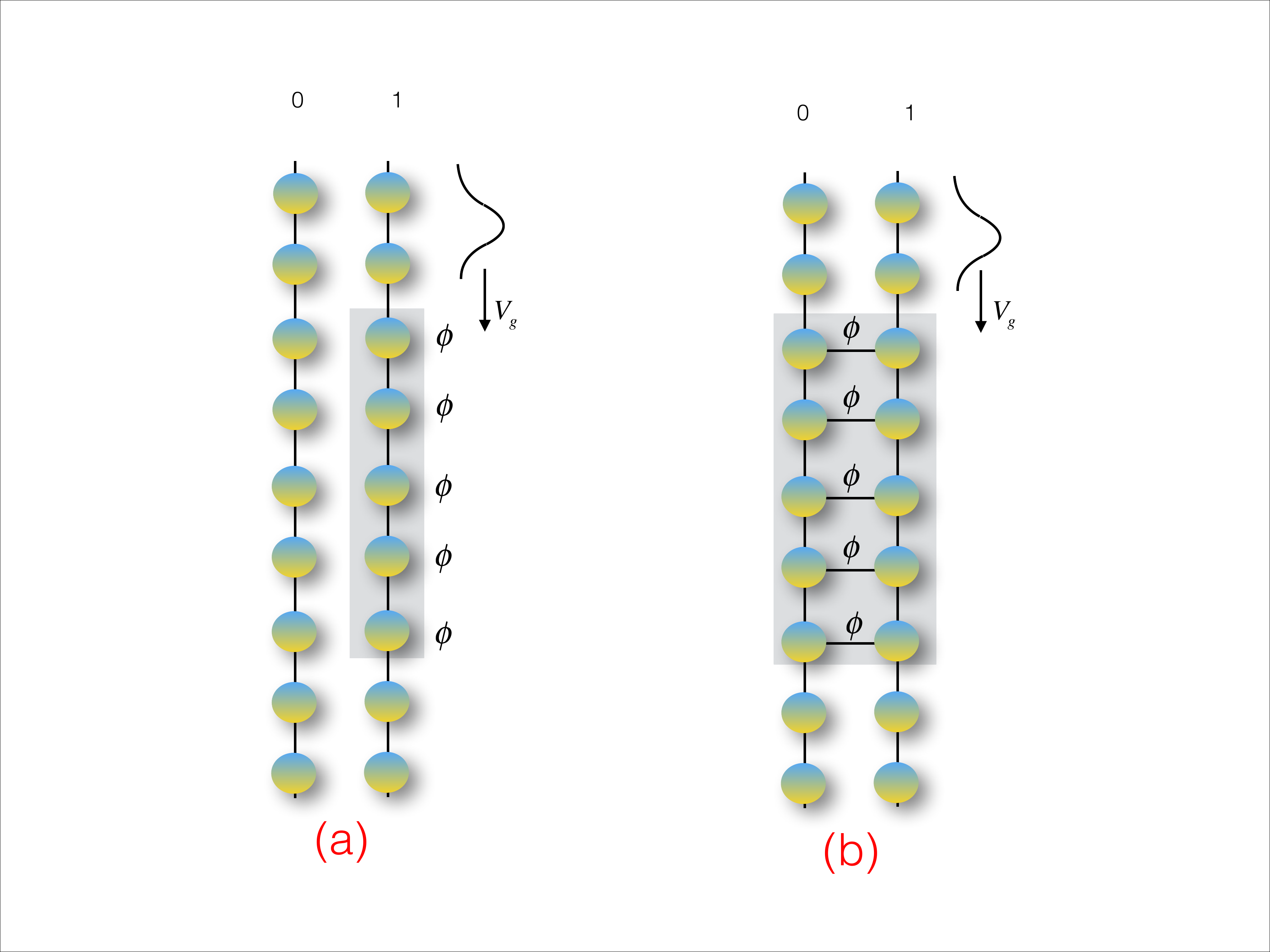}
		\caption{(Color Online) (a) Implementation of a $e^{i\phi Z}$ gate. (b) Implementation of a $e^{i\phi X}$ gate. A curly line indicates the Gaussian wave packet and the boxes shows on which region the gates are acting.  The Gaussian wave packet enters the gate with group velocity $v_g$ and the length of the gate is chosen to be much larger than the width of the packet. The $\phi$ signs depicts the extra interactions for implementations of the gates. }
		\label{fig:X&Zgate2}
	\end{figure}
	
	Notice that in this scheme, each gate (both single and two qubit gates) is implemented in a confined gate block. This means that only a portion of the chains have the added extra interaction terms, see figure (\ref{fig:X&Zgate2}). In fact, when the packets are localized outside of a gate, up to some approximation, they do not feel the presence of these extra terms and we can safely ignore them. Also when they are well localized inside a gate block, again up to some approximation, we can safely assume that the gate spans the whole chain. The detailed error analysis in this regard can be found in \cite{Shor et al.}, we just mention some of their main results here. It has been shown that if we choose the width of the Gaussian packets to be $\Delta x = \Theta\left(N^{1/3}\right)$, and we take the length of each gate block to be $\Theta\left(N^{2/3}\right)$, then we can obtain error $O\left(\frac{1}{(Mg)^{\delta/3}}\right)$ with $N=\Omega\left(M^{3+\delta}g^{3+\delta}\right)$ for any $\delta>0$, where $M$ is the number of qubits, and $g$ is the number of gate blocks.

	\subsubsection{The $e^{i\theta X}$ gate}
	Again we focus on a specific pair of chains pertaining to a single qubit and omit the superscript, the index $n$, pertaining to the label of the qubit. We now add the following XY interaction to the two chains: 
	\begin{equation}
		H_X=\frac{1}{2} \phi  \sum _{j=0}^{N-1} \left(\x_{j}^0 \x_{j}^1+\y_j^0 \y_{j}^1\right),
	\end{equation}
	where the superscripts $0$ or $1$ denotes the two rails pertaining to a single qubit on which the gate is to act. 
	The extra interaction term does not commute with the free Hamiltonian (\ref{Hfree}), however in the one-particle sector it does. The reason is that in the one particle sector, the XY interaction ($h:=\frac{1}{2}(\x\x+\y\y)$) acts just as a simple hopping term, i.e. $h|1,0\ra=|0,1\ra$, $h|0,1\ra=|1,0\ra$ and the order of hopping on the legs and the rungs of the ladder in figure (\ref{fig:X&Zgate2}) enforced by $H_\text{free}$ and $H_X$ is immaterial.  Since $|\tilde{p}\ra$ is the single-particle eigenstate  of the free Hamiltonian (\ref{Hfree}), one can easily see that the following two states 
	\be
		|\tilde{p},\pm\ra:=\frac{|\tilde{p}\ra\otimes |\Omega\ra\pm |\Omega\ra\otimes |\tilde{p}\ra}{\sqrt{2}},
	\ee
	are eigenstates of the new Hamiltonian: 
	\be
		(H_\text{free}+H_X) |\tilde{p},\pm\ra=\left(2\cos(\frac{2\pi p}{N})\pm \phi\right)|\tilde{p},\pm\ra.
	\ee
	Therefore, the Gaussian wave-packets constructed from $|\tilde{p},\pm\ra$, in the form 
	\begin{equation}
		\ket{\pm}= \frac{1}{\sqrt{{\Delta p}\sqrt{\pi }}} \sum _{p=0}^{N-1} \sum _{\alpha =-\infty }^{\infty } e^{-\frac{\left(\alpha  N+p-p_0\right){}^2}{2 {\Delta p}^2}}\ket{ \tilde{p},\pm }
	\end{equation}
	when passing through the pairs of chains in a time $t$, will acquire phases $e^{\pm i\phi t}$. This means that the extra Hamiltonian implements the gate $e^{i\theta X}$ on the single qubit which has this type of interaction box on its way. 

	\subsection{The two qubit CPHASE gate}
	Finally, for  performing a controlled phase operation or CPhase gate, first the four chains of the two qubits  are arranged in such a way that the two $1-$ chains of the two qubits, are adjacent to each other, figure \ref{fig:CPhase} (a), \cite{Shor et al.}. The chains $(1,n)$ and $(1, n+1)$ are adjacent to each other. Actually this arrangement makes half of the $1-$ chains near each other and the half far from each other. So a better arrangement will be to put all the $0$ chains in one layer and all the $1-$ chains in another layer above it. The idea of \cite{Shor et al.} for implementing a CPHASE gate is to impose a long-range interaction between all the spins of the two $1$ chains as shown in figure \ref{fig:CPhase} (a). Again omitting the superscripts corresponding to the labels of the qubits and focusing only on the two $1$-chains, the interaction is of the form  
	\begin{equation}
		H_\text{CPhase}= \phi \sum _{i\leq j=0}^{N-1}\frac{\mathbb{I}_i^{1}-\z_i^{1}}{2}\otimes\frac{\mathbb{I}_j^{1}-\z_j^{1}}{2}=\phi\sum_{i\leq j=0}^{N-1} \left(\begin{array}{cccc} 0 & 0 & 0 & 0 \\ 0 & 0 & 0 & 0 \\ 0 & 0 & 0 & 0 \\ 0 & 0 & 0 & 1\end{array}\right)_{i,j}
	\end{equation}
	This Hamiltonian does not commute with the free Hamiltonian for the chains (\ref{Hfree}), but in the one-particle sector it does. If we order the chains as $0, \ 1, \ 1',\  0'$, then a momentum eigenstate like $|\Omega\ra\otimes |\tilde{p_1}\ra\otimes |\tilde{p_2}\ra\otimes |\Omega\ra$, where two packets  are moving on the chains $1$ and $1'$ has the energy $2 \cos(\frac{2 \pi  p_1}{N})+2 \cos(\frac{2 \pi  p_2}{N})+\phi$, while the other states where there is only one or no packet in the two middle chains have the same energy as before, without an extra $\phi$. Since the four possible logical qubits correspond to the following four states on the chains 
	\ba \label{logical quibit}
		|{\bf 0},{\bf 0}\ra\equiv (|G\ra\otimes |\Omega\ra)\otimes (|G\ra\otimes |\Omega\ra),\cr
		|{\bf 0},{\bf 1}\ra\equiv (|G\ra\otimes |\Omega\ra)\otimes (|\Omega\ra\otimes |G\ra),\cr
		|{\bf 1},{\bf 0}\ra\equiv (|\Omega\ra\otimes |G\ra)\otimes (|G\ra\otimes |\Omega\ra),\cr
		|{\bf 1},{\bf 1}\ra\equiv (|\Omega\ra\otimes |G\ra)\otimes (|\Omega\ra\otimes |G\ra),
	\ea
	and in view of \eqref{logical quibit}, one sees that only when two qubits pass through the chains in a time $t$ simultaneously they pick up a phase $e^{it\phi}$, this interaction implements the two-qubit gate {\it CPHASE}
	\be
		\textit{CPHASE} \: |{\bf a},{\bf b}\ra=e^{i t \phi {\bf a}{\bf b}} \, |{\bf a},{\bf b}\ra.
	\ee
	Now we move on to replace this non-local interaction with a local one and come up with a different entangling gate. 

	\section{Implementation of an entangling  gate with local interaction}\label{model}
	In this section we will introduce a new method for implementing an entangling gate. This gate together with the one qubit gates of section (\ref{xz gates})  will then comprise a universal set of quantum gates. The  scheme of \cite{Shor et al.} for performing a CPHASE gate  on two qubits, required non-local interactions between the $1-$rails in which the two qubits were encoded. Therefore,  we introduce an ancillary rail between the two desired $1-$rails, where the ancillary rail has an XY Hamiltonian in a constant external transverse magnetic field. The necessity of this magnetic field will be explained later on.  This ancillary rail will locally interact with two $1-$rails and effectively produce non-local interactions between the two rails (see figure \ref{fig:CPhase} (b)). The general idea here is similar to that from quantum electrodynamics where the photon generates long-range interactions between charged particles. This analogy will help us to build the model intuitively.\\
	\begin{figure}[t]
		\centering
		\includegraphics[width=14cm, height=10cm]{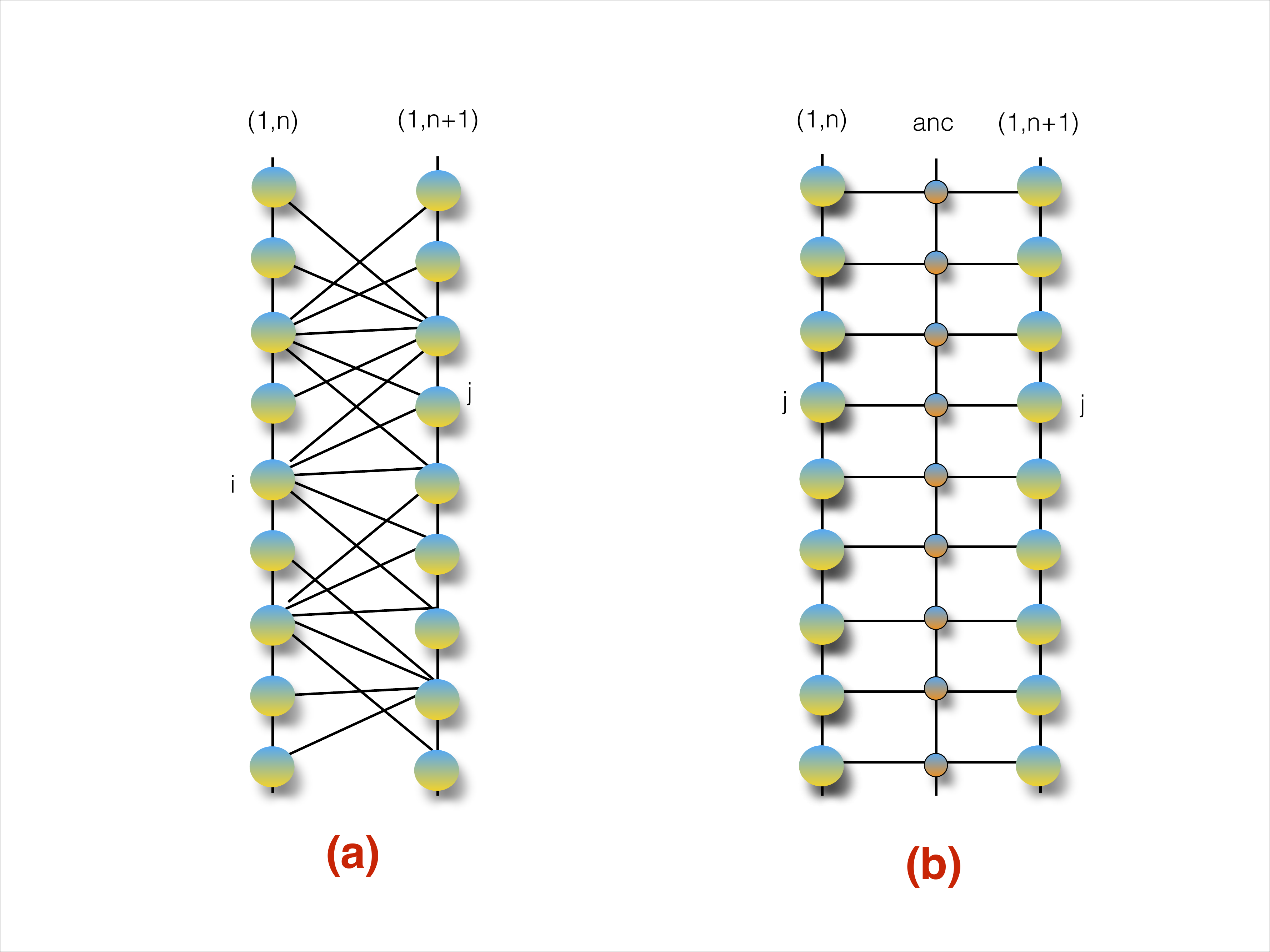}
		\caption{ (Color Online) (a) The CPHASE gate of \cite{Shor et al.} with non-local interactions. (b) The CPHASE gate with local interactions via an ancillary chain. }
		\label{fig:CPhase}
	\end{figure}

To this end we will add an ancillary spin chain in the middle of the two main chains $(n,1)$ and $(n+1, 1)$ and arrange so that it has a doubly degenerate ground state and the energy gap is so large to effectively restrict the dynamics to the degenerate ground space. Denoting the two ground states by $|0\ra_{anc}$ and $|1\ra_{anc}$ and inspired by the basic vertex of quantum electrodynamics, we expect an effective potential as shown in figure (\ref{figphoton}). Working backward from this effective potential we find the Hamiltonian of the ancillary chain to be 

\begin{figure}[t]
		\centering
		\includegraphics[width=14cm, height=10cm]{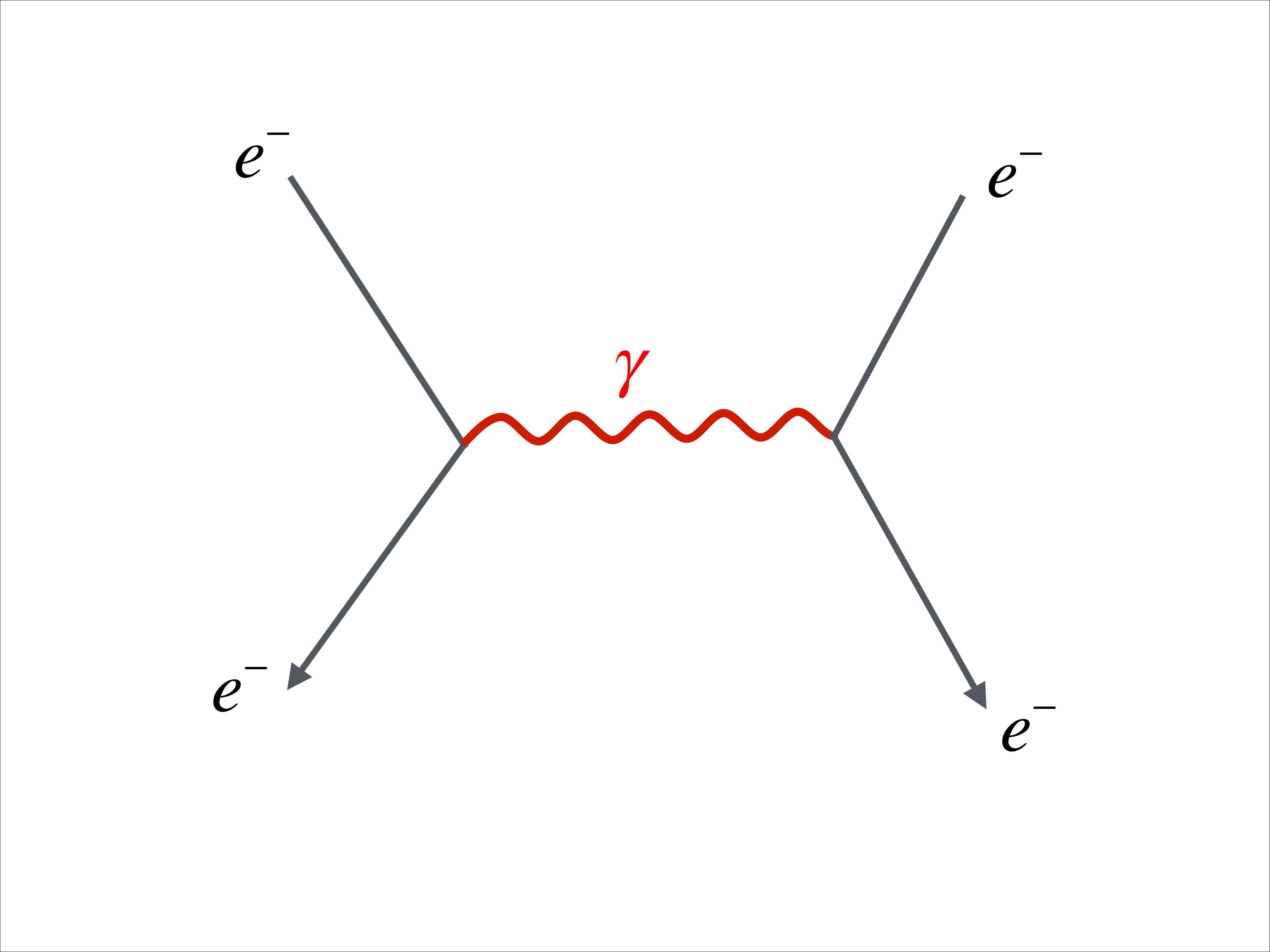}
		\caption{(Color Online) A photon, locally interacting with electrons, mediates long range interactions between them. The ancillary chains in this article for implementing CPHASE gate between the rails carrying qubits is inspired by this effect. }
		\label{fig:qed}
	\end{figure}
	
	\begin{figure}[t]
		\centering
		\includegraphics[width=14cm, height=10cm]{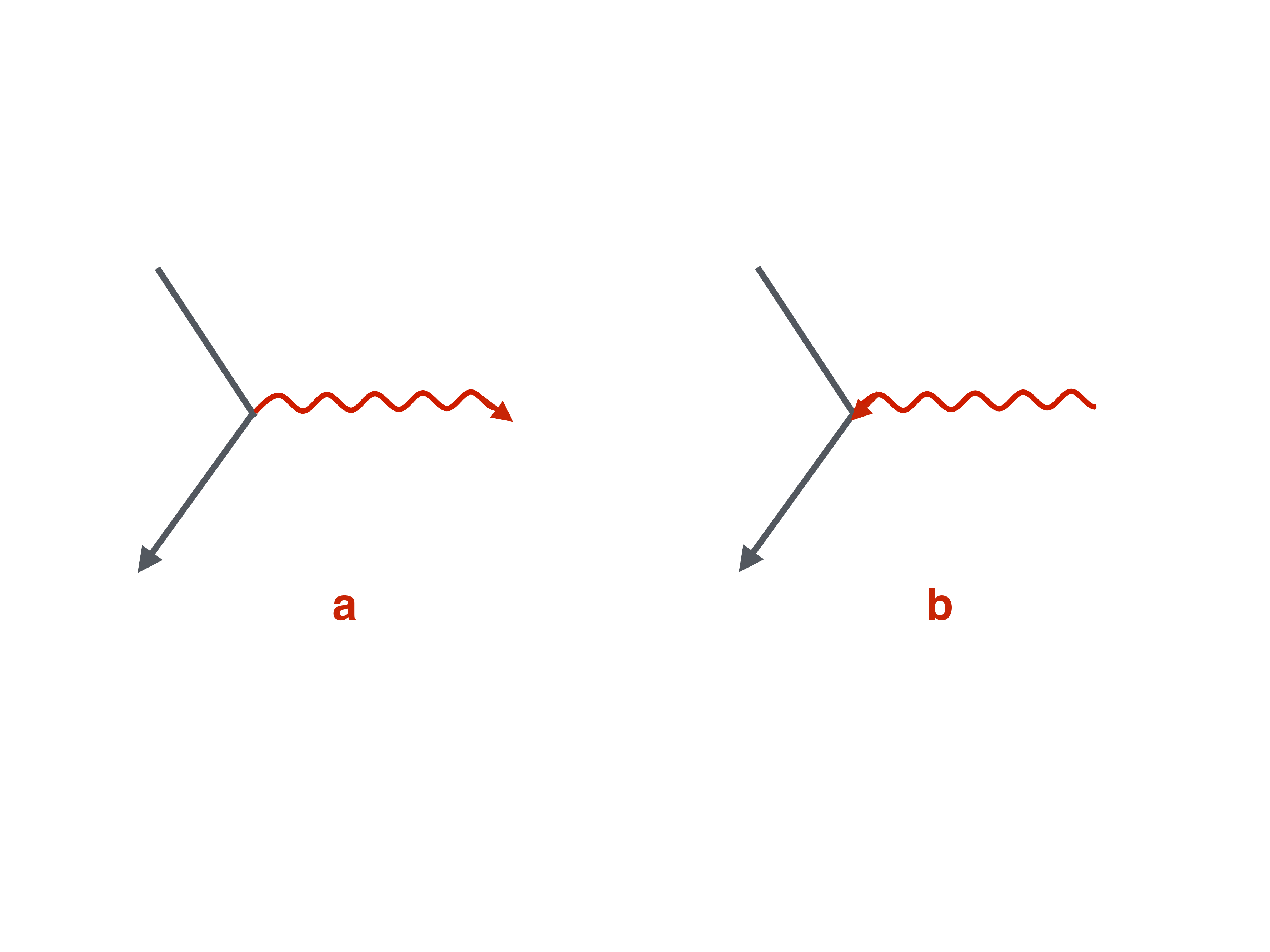}
		\caption{(color online) The basic vertex of QED and the way it inspires the effective interactions (\ref{effint}).  The ancillary qubit plays  the role of the photon with two states $|0\ra_{anc}$ and $|1\ra_{anc}$.  The states of the electron are denoted by  $|0\ra$ and $1\ra$.  The vertex in (a) translate to  $V_j|1\ra\otimes |0\ra_{anc} = |1\ra\otimes |1\ra_{anc}$ and the vertex in (b) to $V_j|1\ra\otimes |1\ra_{anc} = |1\ra\otimes |0\ra_{anc}$, leading to $V_j=n_j\otimes X_{anc}$. 		
		Later in the article and for notational convenience the two states $|0\ra_{anc}$ and $|1\ra_{anc}$ are denoted by $|\Omega\ra$ and $|\Psi\ra$ respectively.  }
		\label{figphoton}
	\end{figure}

	\begin{equation}\label{Hanc}
		H^\text{anc}=\frac{1}{4m} \sum_{i=0}^{N-1} \left(\mathbb{I}-\z_i-\frac{\x_{i}\x_{i+1}+\y_{i}\y_{i+1}}{2}\right),
	\end{equation}	
	where hereafter the superscript 'anc' stands for the ancillary rail. The role of the external magnetic field is to make the ground state degenerate. The reason for the  overall factor of $\frac{-1}{4m}$, will be explained later. The standard way for solving this Hamiltonian is to use the Jordan-Wigner transformation \cite{jordan wigner}. We will use this transformation in  appendix A when we want to find the full spectrum of the model and perform the error analysis in Appendix B. For the present discussion where we are only interested in the ground states, we follow a simpler approach and write (\ref{Hanc}) in the following explicit form:
	\begin{equation}
		H^\text{anc}=\frac{1}{8m} \sum_{j=0}^{N-1}{\bf h}_{j,j+1}
	\end{equation}	
	where
	\be
		{\bf h}_{j,j+1}
		=\left(2\mathbb{I}-\z_j-\z_{j+1}-\x_{j}\x_{j+1}-\y_{j}\y_{j+1}\right)=
		\left(\begin{array}{cccc} 0 & &&\\ & 2 &-2 & \\ & -2& 2 & \\ &&& 4\end{array}\right)_{j,j+1},
	\ee
	which shows that the operator ${\bf h}$ is a positive operator. It is then readily seen that the following two states are eigenstates with zero energy and hence, due to the positivity of ${\bf h}$, the degenerate ground states of $H^\text{anc}$:
	\be \label{logical qubits}
	|\Omega\ra=|0,0,\cdots 0\ra, \h |\Psi\ra:=\frac{1}{\sqrt{N}}\sum_{j=0}^{N-1}|j\ra.
	\ee 
	The argument for $|\Omega\ra$ is simple, because ${\bf h}$  has the product state $|0,0\ra$ as a ground state. The argument for $|\Psi\ra$ is based on the observation that  $\frac{1}{2}\left(\x_i\x_{i+1}+\y_i\y_{i+1}\right)$ acts as a hopping operator, under which $(|0,1\ra\leftrightarrow |1,0\ra)$ and $|00\ra$ and $|11\ra$ are annihilated. The operator ${\bf n}_i=\frac{\mathbb{I}-\z_i}{2}$ also acts as a number operator which gives a total energy equal to zero for $|\Psi\ra$.	\\	
			
	We will later show that the parameter $m$ can be chosen so that there is a sufficiently large gap between the ground space of the ancilla and its excited states. In this ground space we can define the following Pauli operators which act on the two ground states and will be needed in our calculations in the sequel:
	
	\ba \label{logical gates}
	&&\hat{X}^\text{anc}:=|\Omega\ra\la \Psi|+|\Psi\ra\la \Omega|,\cr
	&&\hat{Y}^\text{anc}:=-i\ |\Omega\ra\la \Psi|+i\ |\Psi\ra\la \Omega|,\cr
	&&\hat{Z}^\text{anc}:=|\Omega\ra\la \Omega|- |\Psi\ra\la \Psi|.
	\ea

	We are now ready to introduce the interaction between the ancillary rail and the $1-$rails, which lead to the effective implementation of CPHASE gate.  Since we are concentrating on two specific rails, we use a simpler notation for the Pauli matrices and omit the superscripts.  First consider a block of the ancillary rail and one of the $1-$rails, say the left one, figure \ref{fig:CPhase} (b). We denote  this a $V_\x$ block for reasons to be clear soon. \\
	
	The $j$-th  site of the ancillary rail is connected to its adjacent site (figure \ref{fig:CPhase} (b)),  in the left $1-$rails with couplings  given by 
	\begin{equation}\label{siteV}
		V^{1,\text{anc}}_\x=e\sum_{j=0}^{N-1}\frac{\mathbb{I}-\z_j^1}{2} \otimes \x_j^\text{anc},
	\end{equation}
	where $e$ is a coupling strength. 	  
	A similar interaction exists between the ancillary rail and the right $1-$rail as shown in figure \ref{fig:CPhase} (b). To see what type of interaction this coupling induces on the subspace of logical qubits (\ref{logical qubits}), we note that  by taking the coupling $\frac{1}{4m}$  large enough, and Hence, producing a sufficiently large gap between the ground and the excited states of the ancillary rail, we can restrict the dynamics in the ancillary rail to the two-dimensional ground space spanned by the two vectors $|\Omega\ra$ and $|\Psi\ra$.   In Appendix B, we will show that if we take $\frac{1}{m}> 2N^2$, then restricting the dynamics to the ancillary ground space causes an error which is less than $O(\frac{1}{N^{{1}/{6}}})$. 
	Therefore, we need the effective interaction induced in this subspace which is given by 
	$V^{1,\text{anc}}_\text{x, eff} = (\mathbb{I}\otimes  P_0)V^{1,\text{anc}}_\x(\mathbb{I}\otimes P_0)$, where 
	\be \label{project}
	P_0:=|\Omega\ra\la \Omega|+|\Psi\ra\la \Psi|,
	\ee
	is the projection operator on the ground space of the ancillary rail. In view of the form of $V^{1,\text{anc}}_\x$ and (\ref{project}) we have to calculate the following:
	
	\be
	P_0 \, \x_j P_0=(|\Omega\ra\la \Omega|+|\Psi\ra\la \Psi|)\x_j(|\Omega\ra\la \Omega|+|\Psi\ra\la \Psi|).
	\ee
	We now note that, since $\x_j$ creates an extra particle on an empty site $j$ or remove a particle from this site, then two of the matrix elements vanish, namely $\la \Omega|\x_j|\Omega\ra=\la \Psi|\x_j|\Psi\ra=0$. We also find that $\la \Omega|\x_j|\Psi\ra=\la \Psi|\x_j|\Omega\ra=\frac{1}{\sqrt{N}}$. Therefore, the effective interaction turns out to be of the form  
	
	\be\label{effint}
	V^{1,\text{anc}}_\text{$\x$, eff}=e\frac{1}{\sqrt{N}} \hat{N}^\text{1} \otimes \hat{X}^\text{anc},
	\ee	
	where $\hat{N}^\text{1}$ is the number operator on rail $1$ and $\hat{X}^\text{anc}$ is the Pauli operator $X$ on the ancillary rail (\ref{logical gates}). We remind the reader that the number operator $\hat{N}^\text{1}$ detects the existence of a Gaussian packet on rail $1$.
	In a similar way we can construct another block where the interactions are of the type $\frac{ \mathbb{I}-\z_j}{2}	\otimes \y_j$. This will then lead to the effective interaction
	
	\be
	V^{1,\text{anc}}_\text{$\y$, eff}=e\frac{1}{\sqrt{N}} \hat{N}^\text{1} \otimes \hat{Y}^\text{anc} \label{V}
	\ee	
	
	From these types of interaction between the ancillary rail and the $1-$rails, and by adjusting the signs of the coupling $e$, we construct two blocks of interactions between the ancillary rail and the two $1-$rails adjacent to it and call them respectively simply by $V_X$ and $V_Y$, as in figure (\ref{fig:VVV}). They are defined as follows, where we have simplified the notation, that is,  instead of denoting the two $1-$rails by $(n,1)$ and $(n+1,1)$ we simply denote them by $1$ and $1'$:  
	\ba
	V_X&=&e\frac{1}{\sqrt{N}} \left[\hat{N}^\text{1}-\hat{N}^\text{1'}\right] \otimes \hat{X}^\text{anc}\cr
	V_Y&=&e\frac{1}{\sqrt{N}} \left[\hat{N}^\text{1}-\hat{N}^\text{1'}\right] \otimes \hat{Y}^\text{anc}.
	\label{V}
	\ea	
	With these two blocks, we can now implement an entangling two-qubit gate which make universal computation possible. As we will shortly show the two-qubit gate is given by
	
	\be \label{cphase representatio}
	\Lambda_\phi=\left(\begin{array}{cccc} 1 &&&\\ & e^{-i\phi}&& \\ && e^{-i\phi} & \\ &&&1\end{array}\right),
	\ee
	which is clearly an entangling gate for generic values of $\phi$. Therefore, the effect of three consecutive blocks of the form (\ref{protocol}) as shown in figure (\ref{fig:VVV}), is given by $\Lambda_\phi$:  
	\begin{equation} \label{protocol}
		\Lambda_\phi :=e^{-i\left(\hat{N}^{1}-\hat{N}^{1'}\right)\frac{\pi}{4}\hat{Y}^\text{anc}}
		e^{i\left(\hat{N}^{1}-\hat{N}^{1'}\right)\phi \hat{X}^\text{anc}}
		e^{i\left(\hat{N}^{1}-\hat{N}^{1'}\right)\frac{\pi}{4} \hat{Y}^\text{anc}}
	\end{equation}
	
	\begin{figure}[t]
		\centering
		\includegraphics[width=17cm, height=12cm]{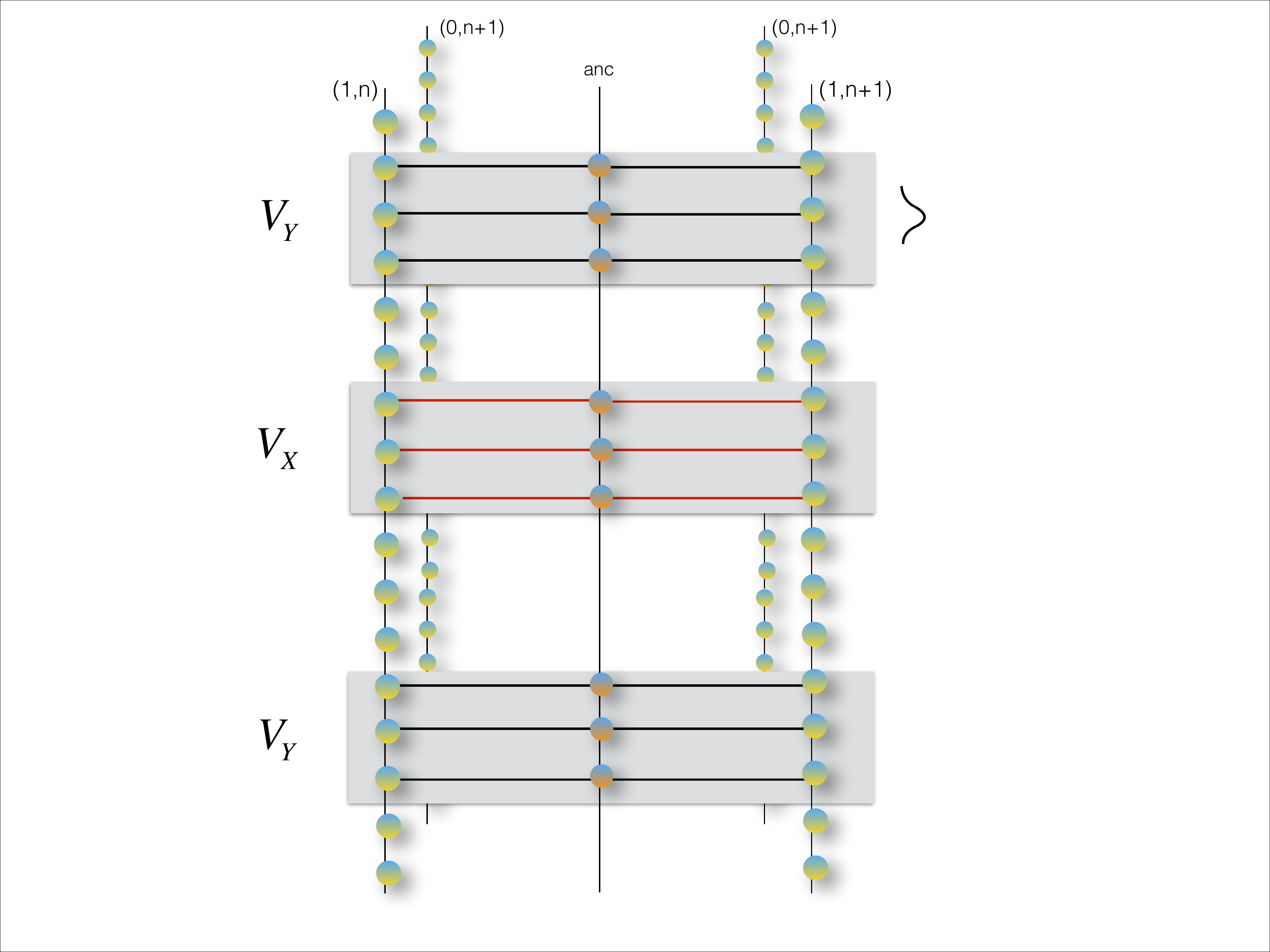}
		\caption{(Color Online) Three successive blocks of the type $V_Y$ and $V_X$, as discussed in the text, implement a controlled phase gate $\Lambda_\phi$, introduced in eqs. (\ref{cphase representatio}) and (\ref{protocol}), between qubits $n$ and $n+1$. We have arranged the $0$ rails and the $1$ rails in two separate layers.}
		\label{fig:VVV}
	\end{figure}
	
	Now we have to find the effect of the above operator on the logical states of the two qubits for which the $1-$rails are part of. We also initialize the state of the ancillary rail at $|\Omega\ra$. First from () we note that  when a logical qubit is in the $|{\bf 0}\ra$ state there is no Gaussian packet in its $1-$rail and when it is in the state $|{\bf 1}\ra$, then there is a Gaussian packet in its $1-$rail. To be precise we have to calculate the effect of these states on the following state state, $|\Omega\ra_1\otimes |\Omega\ra_{1'}\otimes |\Omega\ra_{anc},$ and three other states where there is one or two  Gaussian packets on the first two rails. In view of the fact that the existence of a Gaussian packet in a $1-$rail corresponds to a logical state of the dual rail qubit to be $|{\bf 1}\ra$, we use a simplified notation and denote these states simply by $|{\bf 0},{\bf 0}\ra\otimes |\Omega\ra,\ \  |{\bf 0},{\bf 1}\ra\otimes |\Omega\ra$
	and so on, with the understanding that the states of the $0$ rails has been suppressed. We then find that 
	\ba
		\Lambda_\phi|{\bf 0},{\bf 0}\ra\otimes |\Omega\ra&=&|{\bf 0},{\bf 0}\ra\otimes |\Omega\ra\cr
		\Lambda_\phi|{\bf 1},{\bf 1}\ra\otimes |\Omega\ra&=&|{\bf 1},{\bf 1}\ra\otimes |\Omega\ra,
	\ea
	since in both these cases $\hat{N}^1-\hat{N}^{1'}=0$ and all the blocks in (\ref{fig:VVV}) act as identity operators. 
	On the other hand when the states of the two rails are $|{\bf 1}, {\bf 0}\ra$ or $|{\bf 0}, {\bf 1}\ra$, then the operators on the ancilla act as either $e^{-i\frac{\pi}{4}\hat{Y}^\text{anc}}
	e^{i\phi \hat{X}^\text{anc}}
	e^{i\frac{\pi}{4} \hat{Y}^\text{anc}}$ or $e^{i\frac{\pi}{4}\hat{Y}^\text{anc}}
	e^{-i\phi \hat{X}^\text{anc}}
	e^{-i\frac{\pi}{4} \hat{Y}^\text{anc}}$. Both operators are equal to $\left(\begin{array}{cc} e^{-i\phi}& 0 \\ 0 & e^{i\phi}\end{array}\right)$ where we have used the representation of the last operator in the ground space of the ancillary rail spanned by $|\Omega\ra \equiv \left(\begin{array}{c} 1 \\ 0 \end{array}\right)$  and $|\Psi\ra\equiv \left(\begin{array}{c} 0 \\ 1 \end{array}\right)$. When acting on the state $|\Omega\ra$ they both  
	produce only a phase $e^{-i\phi}$. Note that in all cases the  state of the ancillary rail returns to its initial value $|\Omega\ra$. Putting everything together and suppressing the state of the ancilla,  we have proved that  
	
	\ba
	\Lambda_\phi|{\bf 0}, {\bf 0}\ra &=& |{\bf 0}, {\bf 0}\ra\cr
	\Lambda_\phi|{\bf 0}, {\bf 1}\ra &=& e^{-i\phi}|{\bf 0}, {\bf 1}\ra\cr
	\Lambda_\phi|{\bf 1}, {\bf 0}\ra &=& e^{-i\phi}|{\bf 1}, {\bf 0}\ra\cr
	\Lambda_\phi|{\bf 1}, {\bf 1}\ra &=& |{\bf 1}, {\bf 1}\ra,
	\ea
	which shows that $\Lambda_\phi$ is of the form \eqref{cphase representatio}. In this way the universal set of quantum gates is implemented with local interactions between a set of XY chains.
	
	We should point out that although in all the calculations the length of interaction boxes, both for single qubit gates and for the CPHASE gate have been taken to be equal to $N$, this is not necessary.  In fact, as stressed and elaborated in \cite{Shor et al.}, the length of such boxes should only be well larger than the width of a Gaussian packet.
	More specifically, it has been shown that for having the least possible amount of error, the length of such boxes should be equal to $\Theta\left(N^{2/3}\right)$. Now considering eqs. \eqref{V} and \eqref{protocol}, for implementing a phase $\Phi$ in our entangling gate, we should have $\Phi = \frac{e}{\sqrt{N}}t$, where $t = \frac{\text{gate length}}{2}$ is the time that takes for a wave packet to pass the gate block. Therefore, for having a phase $\Phi \approx 1$, we should choose $\frac{\text{gate length}}{2} \cdot \frac{e}{\sqrt{N}} = \Phi \approx 1$, and consequently $\frac{e}{\sqrt{N}} = \Theta\left(\frac{1}{N^{2/3}}\right)$. We will use this fact later in the Appendix B.

	\section{Possible experimental realizations}\label{exp}
	In this section we discuss possibilities of experimental realization of this scheme.  The discussion is only meant to  show that in view of the long series of attempts for experimental realizations of quantum information processing on spin chains, the ideas presented in this paper are not too far from realizations in the future. Therefore, we  draw the attention of the readers to previous proposals  in which the ingredients of the scheme presented here have been realized in one way or another in closely related systems. To this end, we try to answer the following three questions: \\
	\begin{itemize}
	\item{} Do specific systems exist whose interaction can be modeled by a spin chain Hamiltonian with controllable interactions? \item{} Can these systems be initialized to specific quantum states necessary for quantum information processing? \item{} Can such systems be scaled up to include a large number of individual two state (spin) systems?  
	\end{itemize}
	
	Of course it is understood that each of the proposed systems cannot solve all the problems at once and only in the course of time, a specific, possibly hybrid proposal may be developed with optimal solutions of all the problems. Below we list possible partial answers to the above questions.  \\
	\begin{itemize}
	\item{}	
	  First we note that spin chains with controllable couplings with effective interactions of Heisenberg or XY chains have been realized in several systems. A possible setup is  quantum dot arrays \cite{rev, gatearrays},  where the exchange interactions between the trapped electrons in neighboring dots can be modeled as a spin interactions and the couplings can be tuned  by controlling the voltage barriers between neighboring dots. Another setup is  cold atoms in optical lattices \cite{greiner, sherson, ddl}, where it has been shown \cite{ddl} that it is possible to induce and control strong interaction between spin states of neighboring atoms by adjusting the intensity, frequency, and polarization of
the trapping light.  More specifically it is shown  \cite{ddl} that "for sufficiently strong periodic potential and low temperatures, the atoms will be confined to the lowest Bloch band \cite{greiner} " and "their effective Hamiltonian is given by an
the well-known Heisenberg model (XXZ model)". Moreover it is argued in \cite{ddl}, that homogeneous magnetic fields and also Ising interactions which are required in our implementation of CPHASE gates, "can be easily  
turned on and off by adjusting the potential $V_\m$" or the intensity of laser light. It is important to note that what is required in our scheme is to apply these uniform interactions and magnetic fields over a long array of spins and not on individual atoms or a small number of them. \\

Another scheme is the coupled wave-guide arrays where by suitably choosing the distance between waveguides,  effective interactions of the Hamiltonian has been tuned. In these systems, experimental perfect state transfer has been reported, in arrays of length 11 \cite{pst5} and 19 \cite{art19}. 
	  Other less controllable systems are small NMR systems \cite{nmr} and NV-centers in diamond \cite{nvlukin}. 	\\  
	  \item{}
	  
	  We next face the problem of initializing the chains to the desired states. For gapful spin chains, generally cooling the system down to below the energy gap is the standard tool for this purpose, however for gapless spin chains, which is the more ubiquitous case for spin 1/2 systems, one can use adiabatic evolution to put the system in the ground state \cite{abolfazl}. To prepare a logical qubit in a dual rail, as needed in our scheme,  we should prepare a chain to be in a Gaussian wave-packet which should be almost dispersion free. The conditions for such a preparation have been explored in detail in a theoretical paper \cite{Linden}. Since such a Gaussian packet is nothing but a twisted W-state \cite{Linden}, which is a linear superposition of states in which only one spin is excited, they should resemble spin waves and Hence, should be close to eigenstates of the chain Hamiltonian.   \\
	  
	  Nevertheless, at present we do not know of any concrete experimental proposal for preparing such Gaussian wave packets in the single-particle sector of spin chains. One recent development which may be relevant  in this connection has recently been reported in \cite{naturelast}, where a chain of N=21 waveguides, whose couplings can be modeled by an $XY$ Hamiltonian, have been used to perform Discrete Fourier Transform (DFT). It has been shown in \cite{naturelast} that an input signal of Gaussian form, prepared "by focusing a beam from a HeNe laser", can be converted into a Gaussian profile along the chain. According to \cite{naturelast}, the applications of their scheme, reaches among other areas, " qubit storage and  realization of perfect discrete lenses for non-uniform input distributions", which in turn "opens the way to many interesting applications in integrated quantum computation".\\

\item{} Finally, we come to the question whether or not these spin chains can be scaled up to large sizes. This is In fact, the basic property of every viable candidate for quantum information processing. Generally every protocol, whether it be ion trap or optical lattice faces this problem. As explained in the answer to the first question, effective interactions between electrons in arrays of quantum dots can be modeled by spin Hamiltonians and the couplings can be controlled by adjusting the potential barriers between the dots. Recent years have seen groundbreaking results in fabricating Si, Si/SiGe and dopant-based quantum dots \cite{rev, gatearrays}.  Critical advances like isolation of single electrons, the observation of spin blockade, single-shot read-out of individual electron spins, novel ways of on-chip multiplexing \cite{gatearrays}, make them promising candidates for realization of quantum spin systems with long spin coherence times, and single site addressability necessary for quantum computation and spintronics, in particular for a protocol like the one discussed in the present paper.

\end{itemize}

	\section{Summary}
	The trend of performing quantum computation on chains of spins, either in the form of quantum Turing machines \cite{benioff}, or quantum ballistic models \cite{feynman} or billiard balls \cite{margus}  goes back to the 80's, well before the rapid upsurge of interest in quantum computation initiated by Shor's factoring algorithm and the demonstration of universal set of quantum gates by Barrenco et al \cite{barenco}.  It was then with the work of Sugato Bose in \cite{bose1} that Heisenberg spin chains were shown to be good carriers of quantum states of spins. Since then, intensive studies  have shown that Heisenberg chains with simple or engineered XY interactions, can act as perfect or almost perfect carries of quantum states over short distances. The arbitrary state of a spin joined to the left end of such a chain is carried with high fidelity to the right end through the natural dynamics of this chain, without any external control and without the need to individually address the spins. It has also been shown that these chains can be initialized to carry a Gaussian wave packet of excitations. Dual rails can encode the two states of a logical qubit depending on which of the two rails are empty and which one carries a packet. A $2n$ array of such dual rails can then act as quantum wires carrying $n$ qubits. On specific areas in these arrays, local interactions can be implemented between the spins of the chain such that when qubits pass through these interaction zones, specific one and two-qubit quantum gates, necessary for universal quantum computation, act on them, again without any external control \cite{Shor et al.}. In this way static templates or quantum circuit hardware, can be constructed, which when joined to each other can create large scale quantum circuits. The scheme of \cite{Shor et al.} however requires  long range interactions between adjacent chains in order to implement the two qubit CPHASE gate. Inspired by the role of gauge particles (photons) in quantum electrodynamics which locally interact with electrons to mediate long range interaction between them, we have shown that by adding extra ancillary chains, one can indeed construct such quantum hardware entirely with local interaction between spin chains. In view of the emerging experimental attempts to realize such schemes for quantum state transfer \cite{pst5}, we hope that our bringing of the scheme of \cite{Shor et al.} closer to experimental and practical feasibility  will pave the way for making static and time independent quantum circuits in the future and quantum chips in the long run.  \\
		
		\textbf{Acknowledgements}
We would like to thank A. Bayat for very helpful comments and discussions and for bringing several useful references to our attention.

	\section{Appendix A: The spectrum of the ancillary chain}
	In this appendix we will determine the full spectrum of the Hamiltonian  \eqref{Hanc}. We will need this spectrum when we discuss the errors caused by replacing the full dynamics with an effective one in the ground space of the ancillary chain. The method of obtaining the spectrum is standard \cite{jordan wigner} and is based on the Jordan-Wigner transformation \cite{jordan wigner}:
	\ba
	\phi_j&=&\underbrace{\z \otimes ... \otimes \z \otimes}_\text{$j-1$ times} \frac{\x+i\y}{2} \otimes \mathbb{I} \otimes ...  \otimes \mathbb{I},\cr
	\phi_j^{\dagger} &=&\underbrace{\z \otimes ... \otimes \z \otimes}_\text{$j-1$ times} \frac{\x-i\y}{2} \otimes \mathbb{I} \otimes ...  \otimes \mathbb{I},
	\ea
	where $\phi_j$ and $\phi_j^{\dagger}$ annihilate and create Fermions at site $j$ respectively and obey the anti-commutation relations:
	\begin{equation}
	\{\phi_j,\phi_k^\dagger\}=\delta_{j,k}, \quad \{\phi_j,\phi_k\}=0, \quad \{\phi_j^\dagger,\phi_k^\dagger\}=0.
	\end{equation}
	With this transformation, the model turns into a free Fermion model with Hamiltonian 
	\begin{equation}
	H^\text{anc}=\frac{1}{4m} \sum_{j=0}^{N-1} \phi_j^\dagger\left(2\phi_j-\phi_{j-1}-\phi_{j+1}\right)+\frac{\hat{P}+1}{4m}\left( \phi_{N-1}^\dagger\phi_0 + \phi_0^\dagger\phi_{N-1} \right)
	\end{equation}
	where,
	\begin{equation}
	\hat{P} := \prod_{j=0}^{N-1} \z_j
	\end{equation}
	Note that from \eqref{Hanc},  $[H^\text{anc} , \hat{P}]=0$, Hence, the Hilbert space is divided into two sectors with $\hat{P}=\pm 1$, where $+1$ specifies the sector with even number of excitations and $-1$ with odd number of particles, and we can write:
	\begin{equation}
	H^\text{anc} = \frac{1+\hat{P}}{2} H^+ + \frac{1-\hat{P}}{2} H^-.
	\end{equation}
	In each sector the Fermionic operators can be decoupled by using a Fourier transformation to the normal modes. The difference lies in the boundary conditions for the modes. In the $\hat{P}=+1$ and th e$\hat{P}=-1$ modes, we use respectively periodic and anti-periodic boundary conditions to find 	
	
	In the  $\hat{P}=-1$, sector we find 
	\begin{equation}
	H^{\pm} = \sum_{p=0}^{N-1} {\omega_p}^\pm  {{a_p}^\pm}^\dagger {a_p}^{\pm}
	\end{equation} 
	where,
	\begin{equation}\label{energy2}
	{a^+_p}^\dagger=\frac{1}{\sqrt{N}} \sum_{j=0}^{N-1} e^{\frac{2\pi i}{N}(p+\frac{1}{2})j} \phi_j^\dagger, \qquad \omega^+_p=\frac{1}{m}\sin^2(\frac{\pi}{N}(p+\frac{1}{2})),
	\end{equation} 	
	and
	
	\begin{equation}\label{energy1}
	{a^-_p}^\dagger=\frac{1}{\sqrt{N}} \sum_{j=0}^{N-1} e^{\frac{2\pi ipj}{N}} \phi_x^\dagger, \qquad \omega^-_p=\frac{1}{m}\sin^2(\frac{ \pi p}{N}).
	\end{equation}
	
	In both sectors the energy eigenstates are formed by successive operation of the respective creation operators on the vacuum, where we have suppressed the superscript $\pm$ for simplicity. 
	\begin{equation}\label{spectrum}
	\ket{\tilde{p}_1,\tilde{p}_2,...,\tilde{p}_i}=a_{p_1}^\dagger a_{p_2}^\dagger ... a_{p_i}^\dagger \ket{\Omega}, \qquad \text{with energy: } E = \omega_{p_1} + ... + \omega_{p_i}.
	\end{equation}

	\section{Appendix B: The error analysis}
	
	Putting aside errors which result from imperfections and inhomogeneity in the couplings, the imperfect shape of wave packets and similar source of errors which may result in practice, we are faced with at least three sources of theoretical errors. These  should be properly bounded in order for a quantum computation scheme to work properly. In this appendix we briefly discuss these bounds. \\
	
	 The first type of error and the easiest ones to be dealt with, results from contamination of single particle states with higher particle ones.  If such errors occur, i.e. if a 1-particle sector $|\a\ra$ is contaminated by a 2-particle state $|\beta\ra$ in the form 
$
	\ket{\psi} = \ket{\alpha} + \epsilon\ket{\beta},
$
where, $\ket{\beta}$ is in the 2-particle sector or higher, due  to conservation of particle number, the error $\epsilon$ acquires only a time-varying phase and in the course of time, its magnitude does not increase. At the end of the circuit, where only single particle measurements are performed, such erroneous states are projected out.\\

The other source of error is related with the degree of localization of the packets. This is common to our scheme and that of \cite{Shor et al.} and we suffice to quote from latter reference that for 
 performing $g$ gates on $M$ qubits if  the size of the chains are chosen to  be $N=\Omega\left(M^{3+\delta}g^{3+\delta}\right)$, then the error of the computation will scale as  $O\left(\frac{1}{(Mg)^{\delta/3}}\right)$, for any $\delta>0$. Therefore, this is not a threat to scalability of the scheme. \\
 
 Finally, we come to the third problem which is specific to our way of implementing the CPHASE gate by local interactions with an ancillary chain, where we require that the couplings in the ancillary rail be strong enough. We will show in the rest of this appendix that if the XY couplings of the ancillary rails are larger than $2N^2$ the error that this would add is $O\left(\frac{1}{N^{1/6}}\right)$. Therefore, in view of the polynomial overhead in the size of chains $N$ for bounding all types of errors, the protocol is scalable. The details of this analysis will follow.\\
	
In what follows we will show that if we choose the couplings $\frac{1}{m}$ such that 	
	 $\frac{1}{m} > 2N^2$, then by substituting the Hamiltonian with the effective Hamiltonian, $H_\text{eff}=(\mathbb{I}\otimes P_0)H(\mathbb{I}\otimes P_0)$, the magnitude of error, per each entangling gate, is $O\left(\frac{1}{N^{1/6}}\right)$.
	Specifically we will show that, for each entangling gate block, and a wave packet $\ket{I}$ initialized at the beginning of this gate block, we have:
	\begin{equation}\label{main result}
	\norm{e^{-iH\Delta t}\ket{I}-e^{-iH_\text{eff}\Delta t}\ket{I}} = O\left(\frac{1}{N^{1/6}}\right)
	\end{equation}
	where $\Delta t$ is the time interval that takes for the wave packet to translate through this gate block. We will prove this result by using time-independent perturbation theory. We split the total Hamiltonian into two parts, the base Hamiltonian $\tilde{H}_0=H_\text{eff}$ and the perturbation potential $V'$, which will be defined in the sequel. Then in theorem \ref{gap} we show that if we choose $\frac{1}{m} > 2N^2$, then there is an energy gap, greater than one, between the ground and the excited states of the ancillary rail. Next in theorem \ref{higher order} and \ref{bound} we derive some useful bound that along with theorem \ref{gap} will be used in corollary \ref{error} to derive some bounds on the transition amplitudes between the subspace that is spanned by $\ket{\Omega}$ and $\ket{\Psi}$, and the other eigenstates of $\tilde{H}_0$. Specifically, in corollary \ref{error} by using perturbation theory, we show that if we have an eigenstate of $\tilde{H}_0$, where the ancillary rail is in its ground states, then it is approximately equal to its perturbed ket up to an error $O\left(\frac{1}{N^{1/6}}\right)$, and also its energy will be perturbed with a correction of order $O\left(\frac{1}{N^{5/6}}\right)$. Afterwards, by using corollary \ref{error} we can show that equation \eqref{main result} holds for eigenstates of $\tilde{H}_0$. Given the fact that the wave packet $\ket{I}$ is a linear combination of some eigenstates of $\tilde{H}_0$, we have to show that the same inequality that holds for these eigenstates is true for the wave packet $\ket{I}$ as well. Hence, in theorem \ref{conservation}, we use the translational symmetry of the system to prove an equation for conservation of momentum, and we use it afterwards to prove the main result of this appendix, the equation \eqref{main result}. Also notice that the error analysis in this appendix only is done for the case where the logical state of the two qubits is $\ket{{\bf 1}{\bf 0}}$, i.e. $\ket{I}=\ket{G}^1\otimes\ket{\Omega}^{1'}\otimes\ket{\Omega}^\text{anc}$, the error analysis for the other cases can be done similarly.\\
	
	We provide the error analysis for the case of a $V_X$ gate block, the case of a $V_Y$ gate block is quite similar. First, we split the total Hamiltonian $H=H_\text{free} + V^{1,\text{anc}}_\x+V^{1',anc}_\x$ into two parts, the base Hamiltonian $\tilde{H}_0:=H_\text{free}+V_\text{eff}$, and the perturbation potential $V':=V^{1,\text{anc}}_\x+V^{1',anc}_\x-V_\text{eff}$:
	\begin{equation}\label{H0}
	H = \left(H_\text{free}+V_\text{eff}\right )+ \left( V^{1,\text{anc}}_\x+V^{1',anc}_\x-V_\text{eff}\right) =\tilde{H}_0+V'
	\end{equation}
	where
	\begin{equation}\label{potential}
	H_\text{free} = H^1+H^{1'}+H^\text{anc}, \qquad
	V_\text{eff} =  V^{1,\text{anc}}_\text{x, eff}+V^{1',anc}_\text{x, eff} = e\frac{1}{\sqrt{N}} \left[\hat{N}^\text{1}-\hat{N}^{1'}\right] \otimes \hat{X}^\text{anc},
	\end{equation}
	and
	\begin{equation}\label{V'}
	V' = e\sum_{x}^{} \left[ {\bf n}_x^1 - {\bf n}_x^{1'} \right] \otimes \left[ \x^\text{anc}_x - \frac{\hat{X}^\text{anc}}{\sqrt{N}} \right].
	\end{equation}
	Notice that the base Hamiltonian $\tilde{H}_0$ here, is In fact, equal to the effective Hamiltonian $H_\text{eff}=(\mathbb{I}\otimes P_0)H(\mathbb{I}\otimes P_0)$ defined in section \ref{model}. Now we have to identify the eigenstates of $\tilde{H}_0$ and then obtain the perturbed eigenstates in terms of the perturbation potential $V'$. Given the fact that $\hat{X}^\text{anc}$ can be diagonalizes as:
	\begin{equation}
	\hat{X}^\text{anc} = \ket{+}\bra{+}-\ket{-}\bra{-}, \qquad \text{where:}
	\quad \ket{\pm}=\frac{\ket{\Omega}\pm\ket{\Psi}}{\sqrt{2}},
	\end{equation}
	and the fact that $\ket{\pm}$ are eigenstates of $H^\text{anc}$, if we change the eigenstates $\ket{\Omega}$ and $\ket{\Psi}$ to states $\ket{\pm}$, we can diagonalize both $V_\text{eff}$ and $H_\text{free}$ simultaneously. So, $V_\text{eff}$ commutes with $H_\text{free}$ and in the sector where $\hat{N}^1=1$ and $\hat{N}^{1'}=0$, the eigenstates of $\tilde{H}_0$ can be written as
	\begin{equation}\label{eigenstates}
	\ket{\tilde{p}, \Omega, \alpha} = \ket{\tilde{p}}^1 \otimes \ket{\Omega}^{1'} \otimes \ket{\alpha}^\text{anc}, \qquad \text{with energy:} \quad 2\cos(\frac{2\pi p}{N})+E_\alpha^\text{anc},
	\end{equation}
	where $\ket{\alpha}^\text{anc}$ is an eigenstate of $H^\text{anc}$ with energy $E_\alpha^\text{anc}$ introduced in equation \eqref{spectrum}, except for the eigenstates $\ket{\Omega}$ and $\ket{\Psi}$, which have changed to states $\ket{\pm}$ with energy $E_\pm=\pm\frac{e}{\sqrt{N}}$. From hereafter we assume that $\hat{N}^1=1$ and $\hat{N}^{1'}=0$, and we just do the error analysis for this case, error analysis in other cases can be done similarly. Before going through the perturbation theory calculations, we have to derive some useful bounds.
	\begin{theorem}\label{gap}
		(Energy Gap) Given $\frac{1}{m}>2N^2$ and $\ket{n_0}=\ket{\tilde{p}, \Omega, \pm}$ defined in equation \eqref{eigenstates}, an eigenstate of $\tilde{H}_0$ with energy $E_n^0$. Then for every other eigenstates $\ket{k_0}=\ket{\tilde{q},\Omega,\alpha}$ of $\tilde{H}_0$ with energy $E_k^0$ such that $\ket{\alpha}^\text{anc}\ne\ket{\pm}$, we have:
		\begin{equation}
		\abs{E_n^0 - E_k^0} > 1
		\end{equation}
	\end{theorem}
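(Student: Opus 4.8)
The plan is to read the two energies directly off \eqref{eigenstates} and reduce the whole claim to a lower bound on the spectral gap of the ancillary chain. Since $\ket{n_0}=\ket{\tilde{p},\Omega,\pm}$ lives in the ancilla ground space, where $V_\text{eff}$ reduces to $\frac{e}{\sqrt{N}}\hat{X}^\text{anc}$ (recall $\hat{N}^1-\hat{N}^{1'}=1$ in the sector at hand), its energy is $E_n^0=2\cos(\tfrac{2\pi p}{N})\pm\frac{e}{\sqrt{N}}$. By contrast $\ket{k_0}=\ket{\tilde{q},\Omega,\alpha}$ with $\ket{\alpha}\ne\ket{\pm}$ is annihilated by $V_\text{eff}$, which acts only inside the ground space, so $E_k^0=2\cos(\tfrac{2\pi q}{N})+E_\alpha^\text{anc}$. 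Writing
\begin{equation}
E_k^0-E_n^0=E_\alpha^\text{anc}+\left[2\cos(\tfrac{2\pi q}{N})-2\cos(\tfrac{2\pi p}{N})\right]\mp\frac{e}{\sqrt{N}},
\end{equation}
I would observe that the bracket is bounded in absolute value by $4$ and that $\frac{e}{\sqrt{N}}=\Theta(N^{-2/3})\to0$, so everything hinges on showing that $E_\alpha^\text{anc}$ exceeds $5+\frac{e}{\sqrt{N}}$ for every excited ancilla state.

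Next I would compute the gap of $H^\text{anc}$ from the spectrum of Appendix A. The ground space $\{\ket{\Omega},\ket{\Psi}\}$ sits at energy $0$, with $\ket{\Omega}$ the empty ($\hat{P}=+1$) state and $\ket{\Psi}$ the zero-momentum one-fermion ($\hat{P}=-1$) state. The cheapest excitation above $\ket{\Omega}$ is a two-fermion state in the even sector, with energy given by the two smallest half-integer mode energies, $\frac{2}{m}\sin^2(\frac{\pi}{2N})$; the cheapest excitation above $\ket{\Psi}$ is a single promoted fermion in the odd sector with energy $\frac{1}{m}\sin^2(\frac{\pi}{N})$. The double-angle identity $\sin^2(\frac{\pi}{N})=4\sin^2(\frac{\pi}{2N})\cos^2(\frac{\pi}{2N})$ shows the even-sector value is the smaller of the two for $N\ge2$, so that the gap is $\Delta_\text{anc}=\frac{2}{m}\sin^2(\frac{\pi}{2N})$.

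Finally I would insert the hypothesis $\frac{1}{m}>2N^2$ to obtain $\Delta_\text{anc}>4N^2\sin^2(\frac{\pi}{2N})=\bigl(2N\sin(\tfrac{\pi}{2N})\bigr)^2$. Since $x\mapsto\frac{\sin x}{x}$ is decreasing, $2N\sin(\frac{\pi}{2N})=\pi\,\frac{\sin x}{x}$ with $x=\frac{\pi}{2N}$ is increasing in $N$, giving $\bigl(2N\sin(\tfrac{\pi}{2N})\bigr)^2\ge8$ for all $N\ge2$. Hence $E_\alpha^\text{anc}\ge\Delta_\text{anc}>8$ for every excited ancilla state, and
\begin{equation}
E_k^0-E_n^0\ge\Delta_\text{anc}-4-\frac{e}{\sqrt{N}}>8-4-\frac{e}{\sqrt{N}}>1,
\end{equation}
because $\frac{e}{\sqrt{N}}\to0$. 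As this difference is positive, $\abs{E_n^0-E_k^0}=E_k^0-E_n^0>1$, which is the claim.

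The main obstacle I anticipate is correctly pinning down the minimal gap $\Delta_\text{anc}$: one must compare the cheapest excitations in the two parity sectors and confirm that the even-sector two-fermion state, rather than a single odd-sector excitation, sets the gap, then verify that the factor $N^2$ in the hypothesis exactly cancels the $\sin^2(\frac{\pi}{2N})\sim\frac{\pi^2}{4N^2}$ suppression so that $\Delta_\text{anc}$ remains an $O(1)$ quantity comfortably above $5$. A secondary point to check carefully is that $V_\text{eff}$ contributes nothing to the excited-state energies, so that the $\pm\frac{e}{\sqrt{N}}$ shift appears only for the ground-space levels and cannot erode the gap.
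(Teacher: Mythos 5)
Your proposal is correct and follows essentially the same route as the paper: read $E_n^0$ and $E_k^0$ off equation \eqref{eigenstates}, bound the cosine terms by $4$ and the $\pm\frac{e}{\sqrt{N}}$ shift via the triangle inequality, identify the ancilla gap as $\min\bigl\{\frac{1}{m}\sin^2(\frac{\pi}{N}),\,\frac{2}{m}\sin^2(\frac{\pi}{2N})\bigr\}=\frac{2}{m}\sin^2(\frac{\pi}{2N})$ by comparing the two fermion-parity sectors, and invoke $\frac{1}{m}>2N^2$. Your only departure is a minor refinement at the end, where the monotonicity of $\frac{\sin x}{x}$ yields the exact bound $4N^2\sin^2(\frac{\pi}{2N})\ge 8$ for all $N\ge 2$, whereas the paper settles for the asymptotic estimate $\frac{1}{2m}\left(\frac{\pi}{N}\right)^2\approx\pi^2$.
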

	\begin{proof}
		Take $\ket{\alpha}^\text{anc}=\ket{\tilde{p}_1,\tilde{p}_2,...,\tilde{p}_i}^\text{anc}$ defined in equation \eqref{spectrum}, then from equation \eqref{eigenstates} we have:
		\begin{align}
		\abs{E_n^0 - E_k^0} &= \abs{2\cos(\frac{2\pi p}{N})\pm\frac{e}{\sqrt{N}} - 2\cos(\frac{2\pi q}{N})-E_\alpha^\text{anc}}
		\ge E_\alpha^\text{anc} - 2\abs{\cos(\frac{2\pi p}{N})} - 2\abs{ \cos(\frac{2\pi q}{N})} - \abs{\frac{e}{\sqrt{N}}} \cr
		&\ge E_\alpha^\text{anc} - 4 - \frac{\abs{e}}{\sqrt{N}}
		\end{align}
		Now notice that:
		\begin{equation}
		\min_{\alpha}\{E_\alpha^\text{anc}\} = \min_{} \left\{\frac{\sin^2\left(\frac{\pi}{N}\right)}{m},\frac{\sin^2\left(\frac{\pi}{N}(\frac{1}{2})\right)}{m} + \frac{\sin^2\left(\frac{\pi}{N}(N-\frac{1}{2})\right)}{m}\right\} = 2\frac{\sin^2(\frac{\pi}{2N})}{m}
		\end{equation}
		hence,
		\begin{equation}
		\abs{E_n^0 - E_k^0}
		\ge 2\frac{\sin^2(\frac{\pi}{2N})}{m} - 4 - \frac{\abs{e}}{\sqrt{N}}
		\approx \frac{1}{2m}\left(\frac{\pi}{N}\right)^2 - 4 > 1
		\end{equation}
		where we have used the assumption that $\frac{1}{m} > 2N^2$. \\
	\end{proof}
	Before going to the next theorem, as we mentioned at the end of section \ref{model}, we have $\frac{e}{\sqrt{N}} = \Theta\left(\frac{1}{\text{gate length}}\right) = \Theta\left(\frac{1}{N^{2/3}}\right)$, Hence, $e=O\left(\frac{1}{N^{1/6}}\right)$ is a vanishingly small number.
	\begin{theorem}\label{higher order}
		For $V'$ the perturbation potential defined in equation \eqref{V'}, we have:
		\begin{equation}
		\sum_{k_1,...,k_{t-1}}^{} \abs{V'_{k_0k_1}V'_{k_1k_2}...V'_{k_{t-1}k_t}} = O\left(e^t\right)
		\end{equation}
		where $V'_{ab}=\bra{a}V'\ket{b}$, and each summation is over some eigenstates of $\tilde{H}_0$ which lie in the section where $\hat{N^1}=1$ and $\hat{N}^{1'}=0$.
	\end{theorem}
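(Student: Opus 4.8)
The plan is to fix the two endpoints $\ket{k_0}$ and $\ket{k_t}$ and to read the left‑hand side as the $(k_0,k_t)$ entry of $|V'|^t$, the $t$‑th power of the matrix obtained from $V'$ by taking the modulus entrywise on the sector $\hat{N}^1=1,\ \hat{N}^{1'}=0$. Since every factor of $V'$ carries one power of $e$, the whole content of the statement is that the sum over the $t-1$ intermediate labels produces no compensating power of $N$. The first step is to exploit translational invariance: because in this sector $V'=e\sum_x {\bf n}_x^1\otimes\left(\x_x^\text{anc}-\hat{X}^\text{anc}/\sqrt{N}\right)$ is invariant under the simultaneous shift of rail $1$ and the ancilla, it conserves the total lattice momentum $p+K_\alpha$, where $K_\alpha$ denotes the total momentum of the ancillary Fock state $\ket{\alpha}$. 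Along the chain $k_0\to k_1\to\dots\to k_t$ the total momentum is therefore pinned to that of $k_0$, so once the ancilla state $\alpha_i$ is chosen the rail‑$1$ momentum $p_i=K_{\alpha_0}-K_{\alpha_i}$ is forced. This collapses each momentum sum and reduces the claim to
\begin{equation}
\sum_{k_1,\dots,k_{t-1}}\abs{V'_{k_0k_1}\cdots V'_{k_{t-1}k_t}}
= e^t\sum_{\alpha_1,\dots,\alpha_{t-1}}\prod_{i=1}^{t}\abs{\bra{\alpha_i}\x_0^\text{anc}\ket{\alpha_{i-1}}}+(\text{corrections}),
\end{equation}
where the corrections arise from the $\hat{X}^\text{anc}/\sqrt{N}$ piece and carry an extra factor $1/\sqrt{N}$, hence should be strictly smaller than the leading term and can be absorbed into the $O(e^t)$ at the end.

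The decisive step is a uniform bound on the matrix elements of a single spin flip in the energy eigenbasis of the ancilla. Using the Jordan–Wigner normal modes of Appendix A one has $\x_0^\text{anc}=\phi_0+\phi_0^\dagger=\frac{1}{\sqrt{N}}\sum_p\left(a_p+a_p^\dagger\right)$, so $\x_0^\text{anc}$ connects two Fock states only when they differ by the occupation of a single mode, and in that case the matrix element equals $\pm 1/\sqrt{N}$. Thus every $\abs{\bra{\alpha_i}\x_0^\text{anc}\ket{\alpha_{i-1}}}$ is either $0$ or $1/\sqrt{N}$, and each ancilla eigenstate has exactly $N$ neighbours. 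I stress that this is the point that defeats the naive estimate: the row sum $\max_a\sum_b\abs{V'_{ab}}$ is of order $e\sqrt{N}$, so bounding the chain link by link blows up; one must keep both endpoints fixed and combine the uniform size $1/\sqrt{N}$ of the links with a count of the admissible paths.

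The third step is therefore a path count. A nonzero term corresponds to a sequence of $t$ single‑mode flips carrying $\alpha_0$ to $\alpha_t$; the modes flipped an odd number of times must form precisely the symmetric difference $D=\alpha_0\triangle\alpha_t$, of size $d$ (with $d\equiv t \bmod 2$). Grouping the paths by the number $k=(t-d)/2$ of modes that are flipped twice, I would show that the number of length‑$t$ paths between the fixed states $\alpha_0$ and $\alpha_t$ is $O\!\left(N^{k}\right)=O\!\left(N^{(t-d)/2}\right)\le O\!\left(N^{t/2}\right)$, the implied constant depending only on $t$ (the dominant contribution using $\lfloor t/2\rfloor$ repeated modes, the $t!$‑type ordering factors being $t$‑dependent constants). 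Multiplying the number of paths by $(1/\sqrt{N})^{t}$ then gives
\begin{equation}
\sum_{\alpha_1,\dots,\alpha_{t-1}}\prod_{i=1}^{t}\abs{\bra{\alpha_i}\x_0^\text{anc}\ket{\alpha_{i-1}}}
= O\!\left(N^{(t-d)/2}\right)\left(\frac{1}{\sqrt{N}}\right)^{t}
= O\!\left(N^{-d/2}\right)=O(1),
\end{equation}
so the full expression is $O(e^t)$, with the extra smallness $N^{-d/2}$ appearing whenever the endpoints differ; this matches the explicit low‑order cases $t=1,2,3$.

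I expect the main obstacle to be exactly this path count, i.e. showing rigorously that fixing \emph{both} endpoints removes the spurious factor of $N$ that a single intermediate resolution of identity would otherwise contribute. Two auxiliary points also need care and are where I would be most cautious. First, the momentum‑conservation reduction relies on translation covariance $\bra{\alpha_b}\x_x^\text{anc}\ket{\alpha_a}=e^{2\pi i(K_{\alpha_b}-K_{\alpha_a})x/N}\bra{\alpha_b}\x_0^\text{anc}\ket{\alpha_a}$, which is clean in the bulk but must be checked against the parity‑dependent boundary term of Appendix A, since $\x_0^\text{anc}$ flips fermion parity and maps between the periodic and antiperiodic sectors; I would verify that the per‑mode amplitude remains $1/\sqrt{N}$ across sectors. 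Second, I must confirm that the $\hat{X}^\text{anc}/\sqrt{N}$ corrections, which contribute only momentum‑conserving links within the ground space and carry an additional $1/\sqrt{N}$, genuinely stay below the leading estimate for every $t$; given that $P_0\left(\x_x^\text{anc}-\hat{X}^\text{anc}/\sqrt{N}\right)P_0=0$, these links cannot chain purely through the ground space, which should make their contribution strictly subleading.
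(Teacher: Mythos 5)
Your proposal breaks at its decisive step, and it breaks exactly at the point you flagged but deferred. The claim that $\x_0^\text{anc}$ connects each ancilla eigenstate to exactly $N$ neighbours with amplitude $\pm 1/\sqrt{N}$ does not survive the parity-sector check: $\x_0^\text{anc}$ flips fermion parity, and the eigenstates of $H^\text{anc}$ in the two sectors are Fock states built from \emph{different} mode sets (antiperiodic momenta $p+\frac{1}{2}$ for $\hat{P}=+1$, integer momenta for $\hat{P}=-1$, Appendix A). A single-mode flip in one basis, e.g.\ ${a_p^+}^\dagger\ket{\alpha}$, is not an eigenstate of $H^\text{anc}$, and re-expanding it in the other sector's Fock basis spreads the amplitude over many eigenstates. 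The paper's computation inside Theorem \ref{bound} exhibits this concretely: elements such as $\bra{\tilde{\alpha}_1,\tilde{\alpha}_2}\x_x^\text{anc}\ket{\Psi}$ are nonzero for essentially every half-integer pair $(\alpha_1,\alpha_2)$ allowed by momentum conservation, with non-uniform amplitudes proportional to $N^{-3/2}\abs{\cot(\pi\alpha_1/N)-\cot(\pi\alpha_2/N)}$, so a given eigenstate has $O(N^2)$ neighbours rather than $N$, and the row sum of $\abs{V'_{ab}}$ indeed grows with $N$, as you noticed. Consequently your path count --- grouping paths by ``modes flipped twice,'' with the cancellation $N^{(t-d)/2}\cdot N^{-t/2}$ --- is not even well defined across sectors, and the combinatorial premise underlying your main estimate is false.

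Ironically, the route you dismissed is the paper's proof, and it works because you conflated the $\ell_1$-induced (row-sum) norm with the spectral norm. In the sector $\hat{N}^1=1$, $\hat{N}^{1'}=0$, the operators ${\bf n}_x^1$ are orthogonal rank-one projectors, so $V'=e\sum_x {\bf n}_x^1\otimes\bigl(\x_x^\text{anc}-\hat{X}^\text{anc}/\sqrt{N}\bigr)$ is block diagonal in $x$ and its \emph{spectral} norm satisfies $\norm{V'}\le \abs{e}\bigl(1+1/\sqrt{N}\bigr)=O(e)$ --- no factor of $\sqrt{N}$. The absolute values, which naively force row-sum estimates, are defused by an elementary lemma: for fixed endpoint vectors and an orthonormal family $\{\ket{k}\}$, one has $\sum_k \abs{\braket{\Psi_1}{k}\braket{k}{\Psi_2}}=\bra{\Psi_1}\mathcal{O}\ket{\Psi_2}$ with $\mathcal{O}=\sum_k e^{i\theta_k}\ket{k}\bra{k}$ of unit norm. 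Inserting one such phase-diagonal operator at each intermediate resolution of identity collapses the whole sum to a single matrix element, $\bra{k_0}V'\mathcal{O}_1 V'\cdots \mathcal{O}_{t-1}V'\ket{k_t}\le \norm{V'}^t=O(e^t)$, with no momentum bookkeeping and no path counting at all. If you want to salvage a combinatorial argument, you would first have to re-derive the full inter-sector matrix elements (the cotangent weights above) and control their $\ell_1$ sums along chains, which is substantially harder than the two-line norm bound.
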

	\begin{proof}
		First, we have to prove an useful lemma:
		\begin{lemma}
			Let $\ket{\Psi_1}$ and $\ket{\Psi_2}$ be two arbitrary vectors and $\{\ket{k}\}$ be an orthonormal set of vectors in a Hilbert space. Then there exist an operator $\mathcal{O}$, with $\norm{\mathcal{O}} = 1$, such that:
			\begin{equation}
			\sum_{k}^{} \abs{\braket{\Psi_1}{k}\braket{k}{\Psi_2}} = \bra{\Psi_1}\mathcal{O}\ket{\Psi_2}
			\end{equation}
			where the standard operator norm (largest eigenvalue) is used.
		\end{lemma}
		\begin{proof}
			We can write:
			\begin{equation*}
			\sum_{k}^{} \abs{\braket{\Psi_1}{k}\braket{k}{\Psi_2}}
			= \sum_{k}^{} e^{i\theta_k}\braket{\Psi_1}{k}\braket{k}{\Psi_2}
			=  \bra{\Psi_1}\mathcal{O}\ket{\Psi_2}
			\end{equation*}
			where $\mathcal{O}=\sum_{k}^{} e^{i\theta_k} \ket{k}\bra{k}$, and since $\ket{k}$-s are orthonormal, we have $\norm{\mathcal{O}}=1$.
		\end{proof}
		Now by applying this lemma we can write:
		\begin{align}
		\sum_{k_1,...,k_{t-1}}^{} \abs{V'_{k_0k_1}V'_{k_1k_2}...V'_{k_{t-1}k_t}}
		&= \sum_{k_2,...,k_{t-1}}^{} \bra{k_0}V'\mathcal{O}_1V'\ket{k_2}\abs{V'_{k_2k_3}...V'_{k_{t-1}k_t}} \cr
		&= \sum_{k_3,...,k_{t-1}}^{} \bra{k_0}V'\mathcal{O}_1V'\mathcal{O}_2V'\ket{k_3}\abs{V'_{k_3k_4}...V'_{k_{t-1}k_t}}\cr
		&= \sum_{k_4,...,k_{t-1}}^{} \bra{k_0}V'\mathcal{O}_1V'\mathcal{O}_2V'\mathcal{O}_3V'\ket{k_4}\abs{V'_{k_4k_5}...V'_{k_{t-1}k_t}}
		\end{align}
		So if we continue this until the summation vanishes, we will have:		
		\begin{equation}
		\sum_{k_1,...,k_{t-1}}^{} \abs{V'_{k_0k_1}V'_{k_1k_2}...V'_{k_{t-1}k_t}} = \bra{k_0}V'\mathcal{O}_1V'\mathcal{O}_2 \cdots V'\mathcal{O}_{t-1}V'\ket{k_t} \le \norm{V'\mathcal{O}_1V' \cdots \mathcal{O}_{t-1}V'} \le \norm{V'}^t
		\end{equation}
		where $\norm{\mathcal{O}_i}=1$, and we have used the fact that for every two operators we have $\norm{AB}\le\norm{A}\norm{B}$. To finish the proof, it is enough to show that $\norm{V'} = O(e)$. Notice that here we are restricted in the subspace where $\hat{N}^1=1$, which means the operators ${\bf n}_x^1$ are a set of orthogonal projective operators, Therefore, $V'$ is block diagonal and we have:
		\begin{equation}
		\norm{V'} = \norm{e\sum_{x}^{} {\bf n}^1_x \otimes \left( \x^\text{anc}_x - \frac{\hat{X}^\text{anc}}{\sqrt{N}} \right)} = \abs{e} \max_{x}{\norm{\x^\text{anc}_x - \frac{\hat{X}^\text{anc}}{\sqrt{N}}}}
		\le \abs{e}\max_{x}{\norm{\x^\text{anc}_x}} + \abs{e} \norm{\frac{\hat{X}^\text{anc}}{\sqrt{N}}} = O(e)
		\end{equation}
		since $\norm{\x^\text{anc}_x}=\norm{\hat{X}^\text{anc}}=1$.
	\end{proof}
	\begin{theorem}\label{bound}
		If we have $\frac{1}{m} > 2N^2$, then for $\ket{n_0}=\ket{\tilde{p}, \Omega, \pm}$, an eigenstate of $\tilde{H}_0$ with energy $E_n^0$ defined in equation \eqref{eigenstates}, we have:
		\begin{equation}{\label{bound_eq}}
		\sum_{k\ne n}^{} \abs{\frac{\bra{n_0}V'\ket{k_0}}{E_n^0-E_k^0}} = O\left(\frac{1}{N^{2/3}}\right),
		\end{equation}
		where the summation is over all the eigenstates of $\tilde{H}_0$.
	\end{theorem}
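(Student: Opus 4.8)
My plan rests on two observations that together improve on the bound one would get from the spectral gap alone. First, $V'$ was defined in \eqref{V'} precisely as the part of $V^{1,\text{anc}}_\x+V^{1',anc}_\x$ orthogonal to the effective coupling $V_\text{eff}$ of \eqref{potential}, i.e.\ $V_\text{eff}=(\mathbb{I}\otimes P_0)(V^{1,\text{anc}}_\x+V^{1',anc}_\x)(\mathbb{I}\otimes P_0)$; consequently $V'$ has no matrix element inside the ancillary ground space, so $\bra{n_0}V'\ket{k_0}=0$ whenever the ancilla of $\ket{k_0}$ is $\ket{\pm}$. Hence the sum in \eqref{bound_eq} runs only over $\ket{k_0}=\ket{\tilde q,\Omega,\alpha}$ with $\ket{\alpha}^\text{anc}$ a genuinely excited eigenstate of $H^\text{anc}$, for which Theorem~\ref{gap} gives $\abs{E_n^0-E_k^0}>1$. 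Using only this gap together with Theorem~\ref{higher order} yields $\sum_k\abs{\cdots}\le\sum_k\abs{\bra{n_0}V'\ket{k_0}}=O(e)=O(N^{-1/6})$, which is too weak; the extra factor $N^{-1/2}$ must come from the small size of each matrix element combined with the large size of the denominators, the latter guaranteed by $\tfrac1m>2N^2$.

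Second, I would evaluate the matrix elements explicitly. Since $\hat{X}^\text{anc}$ annihilates every excited eigenstate of $H^\text{anc}$, only the $\x^{\text{anc}}_x$ piece of $V'$ contributes, and because $\x^{\text{anc}}_x$ changes the ancilla excitation number by one, $\bra{\pm}\x^{\text{anc}}_x\ket{\alpha}$ is nonzero only when $\ket{\alpha}$ carries one excitation (via $\bra{\Omega}\x^{\text{anc}}_x$) or two (via $\bra{\Psi}\x^{\text{anc}}_x$). For a one-excitation momentum eigenstate $\ket{\tilde r}^\text{anc}$ the site sum collapses by momentum conservation,
\begin{equation}
\bra{n_0}V'\ket{k_0}=\frac{e}{N}\sum_{x=0}^{N-1}e^{\frac{2\pi i(q-p)x}{N}}\bra{\pm}\x^{\text{anc}}_x\ket{\tilde r}=\frac{e}{\sqrt{2N}}\,\delta_{r,\,(p-q)\bmod N},
\end{equation}
so there is exactly one contributing ancilla state per rail momentum $q\ne p$, of amplitude $e/\sqrt{2N}$, while $q=p$ forces $r=0$, i.e.\ the ground state $\ket{\Psi}$, already excluded. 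With $\omega_r=\tfrac1m\sin^2(\tfrac{\pi r}{N})$ from \eqref{energy1} obeying $\omega_r\ge\omega_1\approx\pi^2/(mN^2)>2\pi^2$, the $O(1)$ hopping energies and the $O(N^{-2/3})$ splitting $\pm e/\sqrt N$ are negligible and $\abs{E_n^0-E_k^0}\ge\tfrac12\omega_r$.

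Summing the one-excitation sector then reduces to a single trigonometric series,
\begin{equation}
\sum_{q\neq p}\abs{\frac{\bra{n_0}V'\ket{k_0}}{E_n^0-E_k^0}}\le\frac{2e}{\sqrt{2N}}\sum_{r=1}^{N-1}\frac{1}{\omega_r}=\frac{2e}{\sqrt{2N}}\,m\sum_{r=1}^{N-1}\csc^2\!\left(\frac{\pi r}{N}\right)=\frac{2e}{\sqrt{2N}}\,m\,\frac{N^2-1}{3},
\end{equation}
and the closed form $\sum_{r=1}^{N-1}\csc^2(\pi r/N)=(N^2-1)/3$ together with $m<1/(2N^2)$ makes the right-hand side $O(e/\sqrt N)=O(N^{-2/3})$, exactly \eqref{bound_eq}. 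This is the heart of the estimate: momentum conservation keeps the number of contributing states linear in $N$ with each amplitude $O(1/\sqrt N)$, while the enormous gaps make $\sum_r\omega_r^{-1}=O(mN^2)=O(1)$.

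The remaining and genuinely delicate point is that the multi-excitation ancilla sectors are subdominant. They cannot be dropped by norm, since the two-excitation part carries a comparable share of $\norm{V'\ket{n_0}}^2$. The uniform way to organise all sectors at once is to note that for any eigenstate $\ket{\alpha}$ the function $f_\alpha(x)=\bra{\pm}\x^{\text{anc}}_x\ket{\alpha}$ has a single nonzero discrete Fourier coefficient (total momentum is conserved), whence $\sum_q\abs{\bra{n_0}V'\ket{k_0}}=e\norm{\widehat{f_\alpha}}_1=\tfrac{e}{\sqrt N}\norm{f_\alpha}_2$; this is the uniform source of the decisive $1/\sqrt N$. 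It reduces \eqref{bound_eq} to
\begin{equation}
\sum_{k\neq n}\abs{\frac{\bra{n_0}V'\ket{k_0}}{E_n^0-E_k^0}}\le\frac{2e}{\sqrt N}\sum_{\alpha}\frac{\norm{f_\alpha}_2}{E^{\text{anc}}_\alpha},
\end{equation}
and the claim follows once $\sum_\alpha\norm{f_\alpha}_2/E^{\text{anc}}_\alpha=O(1)$, which for one excitation is the $\csc^2$ sum above. The main obstacle is the multi-excitation contribution to this sum: a naive Cauchy--Schwarz over $\alpha$ replaces $\sum_\alpha\norm{f_\alpha}_2/E^{\text{anc}}_\alpha$ by $(\sum_\alpha\norm{f_\alpha}_2^2)^{1/2}(\sum_\alpha (E^{\text{anc}}_\alpha)^{-2})^{1/2}=O(\sqrt N)\cdot O(1)$ and loses the whole gain, so one must instead control these form factors against the two-magnon energies $\omega_{r_1}+\omega_{r_2}$ directly, using the companion sums $\sum_r\omega_r^{-1/2}=O(\sqrt m\,N\log N)$ tamed by $m<1/(2N^2)$ and the single-spike structure of each $\widehat{f_\alpha}$; this is where I expect the real work to lie.
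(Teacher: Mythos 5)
Your treatment of the ground-space cancellation and of the one-magnon sector coincides with the paper's own proof: the paper likewise notes that $\bra{k_0}V'\ket{n_0}=0$ whenever $\ket{\alpha}^\text{anc}=\ket{\pm}$, drops the $\hat{X}^\text{anc}$ piece of $V'$ against excited ancilla states, collapses the site sum by momentum conservation to a single contributing rail momentum $q$ with amplitude $\frac{e}{\sqrt{2N}}\,\delta(\alpha+q-p)$, lower-bounds the denominators via $\frac{1}{m}\sin^2(\frac{\pi\alpha}{N})-4>\frac{1}{2m}\sin^2(\frac{\pi\alpha}{N})$ under $\frac{1}{m}>2N^2$, and sums the resulting $\csc^2$ series (by integral comparison where you invoke the closed form $(N^2-1)/3$; both give $O(mN^2)=O(1)$, hence $O(e/\sqrt{N})=O(N^{-2/3})$ for this sector). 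Your reduction to $\sum_\alpha\norm{f_\alpha}_2/E_\alpha^\text{anc}$ is a repackaging of the same bookkeeping, since for a single-spike $f_\alpha$ the spike amplitude equals the unique nonvanishing matrix element.

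The genuine gap is that you stop exactly where the theorem still needs proving. The two-excitation sector, which you defer with the admission that this is ``where I expect the real work to lie,'' constitutes the bulk of the paper's proof, and it is closed by an elementary explicit computation rather than by the refined machinery you anticipate. Using the two-magnon eigenfunctions \eqref{spectrum} with the antisymmetric factor $\epsilon(x_1,x_2)$ and half-integer momenta \eqref{energy2}, the paper evaluates
\begin{equation*}
\sum_{x_1,x_2}\epsilon(x_1,x_2)\,e^{-\frac{2\pi i}{N}(q_1x_1+q_2x_2)}
= i\,(N-1)\cot\Bigl(\frac{q_2\pi}{N}\Bigr)\,\delta(q_1+q_2),
\end{equation*}
whence $\abs{V'_{kn}}=\frac{\abs{e}(N-1)}{\sqrt{2N^5}}\abs{\cot(\frac{\pi\alpha_2}{N})-\cot(\frac{\pi\alpha_1}{N})}\,\delta(\alpha_1+\alpha_2+q-p)\le\frac{\abs{e}}{\sqrt{N}}$: the same uniform $1/\sqrt{N}$ amplitude, with exactly one contributing $q$ per pair $(\alpha_1,\alpha_2)$, as in the one-magnon case. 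The two-magnon contribution is then at most $\frac{\abs{e}}{\sqrt{N}}\sum_{\alpha_1\ne\alpha_2}\bigl(\frac{1}{m}\sin^2(\frac{\pi\alpha_1}{N})+\frac{1}{m}\sin^2(\frac{\pi\alpha_2}{N})-4\bigr)^{-1}$, which the paper estimates by integral comparison, again using $\frac{1}{m}>2N^2$, to be $O(1)$. So the route you dismiss --- a uniform amplitude bound times the sum of inverse two-magnon gaps --- is precisely the paper's argument; no Cauchy--Schwarz over $\alpha$, no companion sums $\sum_r\omega_r^{-1/2}$, and no delicate interplay between form factors and energies is needed once the explicit $\cot$ formula is in hand. (The sharper $\cot$-weighted estimate you gesture at would follow from that same formula, and would even remove any possible logarithmic loss in the pair sum, but the paper does not require it.) As written, your proposal establishes \eqref{bound_eq} only for the one-excitation sector and leaves the two-excitation sector, hence the theorem, unproven.
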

	\begin{proof}
		Without loss of generality assume that $\ket{n_0}=\ket{\tilde{p}, \Omega, +}$. Take $\ket{k_0}=\ket{\tilde{q},\Omega,\alpha}$, then we have:
		\begin{equation}
		\bra{k_0}V'\ket{n_0} = \sum_{x}^{} e\bra{\tilde{q}}{\bf n}^1_{x}\ket{\tilde{p}}
		\bra{\alpha} (\x^\text{anc}_{x} - \frac{\hat{X}^\text{anc}}{\sqrt{N}}) \ket{+}
		\end{equation}
		In cases where $\ket{\alpha}^\text{anc}=\ket{\pm}$, clearly we have $\bra{k_0}V'\ket{n_0}=0$, since $P_0=\ket{+}\bra{+}+\ket{-}\bra{-}$ and consequently $\bra{k_0}V'\ket{n_0}=\bra{k_0}V\ket{n_0}-\bra{k_0}P_0VP_0\ket{n_0}=0$, where $V=V^{1,\text{anc}}_\x+V^{1',\text{anc}}_\x$. Therefore, since $\hat{X}^\text{anc}\ket{\pm}=\pm\ket{\pm}$, we have:
		\begin{equation}
		\bra{k_0}V'\ket{n_0} = \sum_{x}^{} e\bra{\tilde{q}}{\bf n}^1_{x}\ket{\tilde{p}}
		\bra{\alpha}\x^\text{anc}_{x}\ket{+}
		\end{equation}	
		Also notice that $\bra{k_0}V'\ket{n_0}$ vanishes, unless we have either $\hat{N}^1\ket{\alpha}^\text{anc}=\ket{\alpha}^\text{anc}$ or $\hat{N}^1\ket{\alpha}^\text{anc}=2\ket{\alpha}^\text{anc}$, since both $\x^\text{anc}_{x}$ and $\hat{X}^\text{anc}$ either create or annihilate a particle. So, According to eqs. \eqref{energy2}, \eqref{energy1}, and \eqref{spectrum}, $\ket{\alpha}^\text{anc}$ has either of the following forms:
		\begin{enumerate}
			\item $\ket{\alpha}^\text{anc} = \ket{\tilde{\alpha}}^\text{anc} =  \frac{1}{\sqrt{N}} \sum_{x}^{} e^{\frac{2\pi i}{N}\alpha x} \ket{x}$ \label{s1}
			\item $\ket{\alpha}^\text{anc} = \ket{\tilde{n_1},\tilde{n_2}}^\text{anc} = \frac{1}{N} \sum_{x_1,x_2}^{} \epsilon(x_1,x_2) e^{\frac{2\pi i}{N}(\alpha_1x_1 + \alpha_2x_2)} \ket{x_1,x_2}$ \label{s2}
		\end{enumerate}
		where $\alpha$ is integer, while $\alpha_1$ and $\alpha_2$ are half integers. Also $\epsilon(x_1,x_2) = \left\{
		\begin{array}{rl}
		1 & \text{if } x_1 < x_2\\
		-1 & \text{if } x_1 > x_2\\
		0 & \text{if } x_1 = x_2
		\end{array} \right.$. \\
		In case \ref{s1}, since $\bra{\alpha}{\x^\text{anc}_{x}}\ket{\Psi}=0$, we have:
		\begin{align}
		\bra{k_0}V'\ket{n_0} &= \sum_{x}^{}e \bra{\tilde{q}}{\bf n}^1_{x}\ket{\tilde{p}}
		\frac{\bra{\alpha}{\x^\text{anc}_{x}}\ket{\Omega}}{\sqrt{2}} 
		= \sum_{x}^{} \frac{e}{N} e^{\frac{2\pi i}{N}x(p-q)} \sum_{y}^{} \frac{1}{\sqrt{2N}} e^{-\frac{2\pi i}{N}\alpha y}\braket{y}{x} \cr
		&= \frac{e}{\sqrt{2N^3}} \sum_{x}^{} e^{\frac{2\pi i}{N}x(p-q-\alpha)}
		= \frac{e}{\sqrt{2N}}\delta(\alpha+q-p)
		\end{align}
		and in case \ref{s2}, again because $\bra{\alpha}{\x^\text{anc}_{x}}\ket{\Omega}=0$, we have:
		\begin{align}
		\bra{k_0}V'\ket{n_0} &= \sum_{x}^{}e \bra{\tilde{q}}{\bf n}^1_{x}\ket{\tilde{p}}
		\frac{\bra{\alpha}{\x^\text{anc}_{x}}\ket{\Psi}}{\sqrt{2}} \cr
		&= \sum_{x}^{} \frac{e}{N} e^{\frac{2\pi i}{N}x(p-q)} \sum_{x_1,x_2,y}^{} \frac{1}{\sqrt{2N^3}} \epsilon(x_1,x_2) e^{-\frac{2\pi i}{N}(\alpha_1x_1 + \alpha_2x_2)}\bra{x_1,x_2}{X^\text{anc}_{x}}\ket{y} \cr
		&= \frac{e}{\sqrt{2N^5}} \sum_{x_1,x_2,y, x}^{} \epsilon(x_1,x_2) e^{-\frac{2\pi i}{N}(\alpha_1x_1 + \alpha_2x_2+(q-p)x)}(\delta_{x_1,x}\delta_{x_2,y}+\delta_{x_1,y}\delta_{x_2,x}) \cr
		&= \frac{e}{\sqrt{2N^5}} \sum_{x_1,x_2}^{} \epsilon(x_1,x_2) \left[ e^{-\frac{2\pi i}{N}((\alpha_1+q-p)x_1 + \alpha_2x_2)} + e^{-\frac{2\pi i}{N}(\alpha_1x_1 + (\alpha_2+q-p)x_2)} \right]
		\end{align}	
		So, we need to compute:
		\begin{align}
		\sum_{x_1,x_2}^{} \epsilon(x_1,x_2) e^{-\frac{2\pi i}{N}(q_1x_1 + q_2x_2)} &= \sum_{x_1<x_2}^{} e^{-\frac{2\pi i}{N}(q_1x_1 + q_2x_2)} - e^{-\frac{2\pi i}{N}(q_1x_2 + q_2x_1)} \cr
		&= \sum_{x_2}^{} \frac{\omega^{q_1x_2}-1}{\omega^{q_1}-1}\omega^{q_2x_2} - \frac{\omega^{q_2x_2}-1}{\omega^{q_2}-1}\omega^{q_1x_2} \cr
		&= \left( \frac{N-1}{\omega^{q_1}-1} - \frac{N-1}{\omega^{q_2}-1}\right) \delta(q_1+q_2) \cr
		&= i (N-1) \cot(\frac{q_2\pi}{N}) \delta(q_1+q_2)
		\end{align}
		where $\omega=e^{-\frac{2\pi i}{N}}$, and $q_i = q'_i + \frac{1}{2}$ for some integer $q'_i$. Hence:
		\begin{equation}
		\abs{\bra{k_0}V'\ket{n_0}} = \frac{\abs{e}(N-1)}{\sqrt{2N^5}} \abs{\cot(\frac{\alpha_2\pi}{N})-\cot(\frac{\alpha_1\pi}{N})} \delta(\alpha_1+\alpha_2+q-p) \le \frac{\abs{e}}{\sqrt{N}}\delta(\alpha_1+\alpha_2+q-p) 
		\end{equation}
		We have shown that always $\abs{V'_{kn} }\le \frac{\abs{e}}{\sqrt{N}}$, and also we have proven that $V'_{kn}=0$, unless we have $\alpha+q=p$ or $\alpha_1+\alpha_2+q=p$, which is simply an equation for conservation of momentum. Now given the fact that for each $\ket{\alpha}^\text{anc}$ there exits at most one $\ket{k_0}=\ket{\tilde{q},\Omega,\alpha}$ such that $V'_{kn}\ne0$, we can write:
		\begin{equation}
		\sum_{k}^{}\abs{\frac{\bra{n_0}V'\ket{k_0}}{E_n^0-E_k^0}} < \frac{\abs{e}}{\sqrt{N}} \left(\sum_{\alpha=1}^{N-1} \frac{1}{\frac{1}{m}\sin^2(\frac{\pi \alpha}{N}) - 4}
		+ \sum_{\alpha_1\ne \alpha_2}^{} \frac{1}{\frac{1}{m}\sin^2(\frac{\pi \alpha_1}{N}) + \frac{1}{m}\sin^2(\frac{\pi \alpha_2}{N}) - 4}\right)
		\end{equation}
		where in the second summation, $\alpha_i = n_i + \frac{1}{2}$ and $n_i=1,2,...,N$. Now since $\frac{1}{m}>2N^2$, we have $\frac{1}{m}\sin^2(\frac{\pi \alpha}{N}) - 4 > \frac{1}{2m}\sin^2(\frac{\pi \alpha}{N})$, and consequently:
		\begin{equation}
			\sum_{\alpha=1}^{N-1} \frac{1}{\frac{1}{m}\sin^2(\frac{\pi \alpha}{N}) - 4}
			< \sum_{\alpha=1}^{N-1} \frac{1}{\frac{1}{2m}\sin^2(\frac{\pi \alpha}{N})}
		\end{equation}
		Now according to a well-known theorem, for every continuous real valued function $f$ that does not have any local maximum in the interval $(1,n)$, we have:
		\begin{equation}
		f(1)+f(2)+\cdots+f(n) \le \int_{1}^{n} f(x) \,\mathrm{d}x + f(1) + f(n),
		\end{equation}
		hence,
		\begin{equation}
		\sum_{\alpha=1}^{N-1} \frac{1}{\frac{1}{m}\sin^2(\frac{\pi \alpha}{N}) - 4}
		< \int\limits_{1}^{N-1} \frac{2m}{\sin^2(\frac{\pi x}{N})} \,\mathrm{d}x + 2\frac{2m}{\sin^2(\frac{\pi}{N})} 
		\approx \left.{\frac{2mN}{\pi}\cot(x)}\right\rvert_{\frac{\pi}{N}}^{\pi-\frac{\pi}{N}}+\frac{2}{\pi^2} = O(1).
		\end{equation}
		Similarly, for the second term we have:
		\begin{align}
		&\sum_{\alpha_1\ne \alpha_2}^{} \frac{1}{\frac{1}{m}\sin^2(\frac{\pi \alpha_1}{N}) + \frac{1}{m}\sin^2(\frac{\pi \alpha_2}{N}) - 4}
		< \sum_{\alpha_1\ne \alpha_2}^{} \frac{2m}{\sin^2(\frac{\pi \alpha_1}{N}) + \sin^2(\frac{\pi \alpha_2}{N})} \cr 
		&< \int\limits_{\frac{1}{2}}^{N-\frac{1}{2}} \sum_{\alpha_2}^{} \frac{2m}{\sin^2(\frac{\pi x}{N})+\sin^2(\frac{\pi \alpha_2}{N})} \,\mathrm{d}x + 2\frac{2m}{\sin^2(\frac{\pi}{2N})+\sin^2(\frac{\pi \alpha_2}{N})} \cr
		&< \sum_{\alpha_2}^{} \left.{\frac{2mN}{\pi}\frac{\tan^{-1}(\frac{\tan(x)}{\sqrt{\frac{a}{a+1}}})}{\sqrt{a}\sqrt{a+1}}}\right\rvert_{\frac{\pi}{2N}}^{\pi-\frac{\pi}{2N}}+\sum_{\alpha_2}^{} \frac{4m}{\sin^2(\frac{\pi \alpha_2}{N})}  \cr
		&\approx \sum_{\alpha_2}^{} \frac{4m}{\sin^2(\frac{\pi \alpha_2}{N})}  + \sum_{\alpha_2}^{} \frac{4m}{\sin^2(\frac{\pi \alpha_2}{N})} 
		= \sum_{\alpha_2}^{} \frac{8m}{\sin^2(\frac{\pi \alpha_2}{N})} = O(1)
		\end{align}
		where in the third line $a=\sin^2(\frac{\pi \alpha_2}{N})$. From the last two inequalities and the fact that $\frac{e}{\sqrt{N}} = \Theta\left(\frac{1}{N^{2/3}}\right)$, we can conclude the proof:
		\begin{equation}
		\sum_{k}^{}\abs{\frac{\bra{n_0}V'\ket{k_0}}{E_n^0-E_k^0}} = \frac{\abs{e}}{\sqrt{N}} \, O(1) = O\left(\frac{1}{N^{2/3}}\right)
		\end{equation}
	\end{proof}
	Now we can use perturbation theory and make use of the previous theorems to obtain our desired results.
	\begin{corollary}\label{error}
		Let $\ket{n_0}=\ket{\tilde{p}, \Omega, \pm}$ be an eigenstate of $\tilde{H}_0$ with energy $E_n^0$ defined in equation \eqref{eigenstates}, and let $V'$ be the perturbation potential defined in \eqref{potential}. Also define $\ket{n}$ to be the normalized perturbed eigenket of $\ket{n_0}$, then we have:
		\begin{equation}
		\sqrt{1-\abs{\braket{n_0}{n}}^2 } = O\left(\frac{1}{N^{1/6}}\right), \qquad \Delta_n = O\left(\frac{1}{N^{5/6}}\right)
		\end{equation}
		where $\Delta_n = E_n - E_n^0$, and $E_n$ is the perturbed energy. 
	\end{corollary}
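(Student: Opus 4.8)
The plan is to analyze $\ket{n}$ and $E_n$ by standard (Rayleigh--Schr\"odinger) time-independent perturbation theory about $\tilde H_0$ with $V'$ as the perturbation, turning Theorems~\ref{gap}, \ref{higher order} and \ref{bound} into quantitative estimates on the two quantities claimed. Introduce the reduced resolvent $R=\sum_{k\ne n}\frac{\ket{k_0}\bra{k_0}}{E_n^0-E_k^0}$. By the matrix-element computation in the proof of Theorem~\ref{bound}, $V'$ couples $\ket{n_0}=\ket{\tilde p,\Omega,\pm}$ only to eigenstates $\ket{k_0}$ whose ancilla factor is \emph{not} $\ket{\pm}$; for all such states Theorem~\ref{gap} gives $\abs{E_n^0-E_k^0}>1$, so every energy denominator appearing below exceeds $1$ in modulus and $\norm{R}<1$ on the relevant subspace. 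Together with $\norm{V'}=O(e)=O(N^{-1/6})$, proved inside Theorem~\ref{higher order}, this places us squarely in the regime where the perturbation series converges geometrically.

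For the overlap estimate I would use the intermediately normalized perturbed ket $\ket{\tilde n}=\ket{n_0}+\sum_{t\ge1}(RV')^t\ket{n_0}$, whose component orthogonal to $\ket{n_0}$ has norm at most $\sum_{t\ge1}\norm{RV'}^t\le\sum_{t\ge1}\norm{V'}^t=O(\norm{V'})=O(N^{-1/6})$ by the geometric series and $\norm{R}<1$. A short computation then shows that $\sqrt{1-\abs{\braket{n_0}{n}}^2}$ equals this orthogonal weight up to a factor $1+O(N^{-1/3})$, giving $\sqrt{1-\abs{\braket{n_0}{n}}^2}=O(N^{-1/6})$. (The leading first-order piece can be sharpened to $O(N^{-2/3})$ via Theorem~\ref{bound}, but the crude operator-norm bound already suffices for the claim and for \eqref{main result}.)

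For the energy shift the key point is that the first-order correction vanishes: because $V'=V-P_0VP_0$ has no diagonal block in the ancilla ground space, $\bra{n_0}V'\ket{n_0}=0$. Hence the leading contribution is second order, $\Delta_n^{(2)}=\sum_{k\ne n}\frac{\abs{V'_{kn}}^2}{E_n^0-E_k^0}$, which I would bound by peeling off one matrix element, $\abs{\Delta_n^{(2)}}\le\big(\max_k\abs{V'_{kn}}\big)\sum_{k\ne n}\frac{\abs{V'_{kn}}}{\abs{E_n^0-E_k^0}}\le\norm{V'}\cdot O(N^{-2/3})=O(N^{-1/6})\cdot O(N^{-2/3})=O(N^{-5/6})$, the weighted sum being exactly the quantity bounded in Theorem~\ref{bound}. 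Each higher-order term of the energy series carries an extra factor $\norm{RV'}=O(N^{-1/6})$ and is therefore negligible, so $\Delta_n=E_n-E_n^0=O(N^{-5/6})$.

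The step demanding the most care---and the reason Theorem~\ref{higher order} is needed---is the uniform control of the tails of both series. Bounding the $t$-th order contribution reduces to estimating sums of products $\sum_{k_1,\dots,k_{t-1}}\abs{V'_{k_0k_1}\cdots V'_{k_{t-1}k_t}}$ over intermediate eigenstates, and Theorem~\ref{higher order} certifies that each is $O(e^t)$; with the gap of Theorem~\ref{gap} this makes $\sum_t(RV')^t$ a genuinely convergent Neumann-type series with geometrically small tail, justifying the truncation to leading order used above. The subtle ingredient here is not the operator norms but the momentum-conservation selection rule extracted in the proof of Theorem~\ref{bound}: it ensures that for each admissible ancilla state there is at most one coupled $\ket{k_0}$, so the intermediate sums stay finite and no hidden power of $N$ re-enters to overwhelm the smallness of $e$.
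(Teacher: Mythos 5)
Your proposal is correct and takes essentially the same route as the paper's own proof: Rayleigh--Schr\"odinger perturbation theory about $\tilde{H}_0$, with the gap of Theorem~\ref{gap} controlling the energy denominators, Theorem~\ref{higher order} giving the $O(e^t)$ control of the series tails, the vanishing of $V'_{nn}$ eliminating the first-order shift, and Theorem~\ref{bound} combined with $\norm{V'}=O(e)$ yielding the $O\left(N^{-5/6}\right)$ energy correction. Your reduced-resolvent/Neumann-series packaging (including the selection-rule observation that $V'$ does not couple states within the ancilla ground space) is only a notational variant of the paper's explicit order-by-order bounds, not a genuinely different argument.
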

	\begin{proof}
		According to time-independent perturbation theory we have:
		\begin{align}
		\sqrt{1-\abs{\braket{n_0}{n}}^2 } &=
		\sqrt{\sum_{k \ne n}^{} \abs{\braket{k_0}{n}}^2 } \cr
		&= \sqrt{\sum_{k \ne n}^{} \abs{\frac{V'_{kn}}{E_n^0-E_k^0}+\sum_{l\ne n}^{} \frac{V'_{kl}V'_{ln}}{(E_n^0-E_k^0)(E_n^0-E_l^0)}- \frac{V'_{nn}V'_{kn}}{\left(E_n^0-E_k^0\right)^2}+ \cdots }^2} \cr
		&\le \sqrt{\sum_{k \ne n}^{} \frac{\abs{V'_{nk}V'_{kn}}}{\Delta_{nk}^2}
			+ 2\sum_{k, l\ne n}^{} \frac{\abs{V'_{nk}V'_{kl}V'_{ln}}}{\Delta_{nk}^2\Delta_{nl}}
			+ \sum_{k, l, m\ne n}^{} \frac{\abs{V'_{nl}V'_{lk}V'_{km}V'_{mn}}}{\Delta_{nk}^2\Delta_{nl}\Delta_{nm}}
			+ \cdots}
		\end{align}
		where $\Delta_{nk}=\abs{E_n^0-E_k^0}$, and we have used the fact that $V'_{nn}=0$. Applying theorem \ref{gap} and \ref{higher order} gives:
		\begin{equation}
		\sqrt{1-\abs{\braket{n_0}{n}}^2 } \le \sqrt{\frac{1}{\Delta^2}\abs{e}^2 + 
			2\frac{1}{\Delta^3}\abs{e}^3 + 3\frac{1}{\Delta^4}\abs{e}^4 + \cdots }
		= O(e) = O\left(\frac{1}{N^{1/6}}\right)
		\end{equation}
		where $\displaystyle{\Delta = \min_{k \ne n}{\abs{E_n^0-E_k^0}}}\ge1$.
		Now notice that:
		\begin{equation*}
		\bra{n_0}\left(\tilde{H}_0-E_n^0-\Delta_n\right)\ket{n}=-\bra{n_0}V'\ket{n} \quad \Rightarrow \quad \Delta_n=\frac{\bra{n_0}V'\ket{n}}{\braket{n_0}{n}}
		= \sum_{k \ne n}^{} \frac{\bra{n_0}V'\ket{k_0}\braket{k_0}{n}}{\braket{n_0}{n}}
		\end{equation*}
		therefore,
		\begin{align}
		\abs{\Delta_n} &\le \sum_{k\ne n}^{} \abs{\frac{V'_{nk}\braket{k_0}{n}}{\braket{n_0}{n}}} \approx \sum_{k \ne n}^{} \abs{V'_{nk}\braket{k_0}{n}} \cr
		&= \abs{\sum_{k \ne n}^{} \frac{V'_{nk}V'_{kn}}{\Delta_{nk}}+\sum_{k, l\ne n}^{} \frac{V'_{nk}V'_{kl}V'_{ln}}{\Delta_{nk}\Delta_{nl}}
		+ \sum_{k, l, m\ne n} \frac{V'_{nk}V'_{kl}V'_{lm}V'_{mn}}{\Delta_{nk}\Delta_{nl}\Delta_{nm}}+\cdots} \cr
		&\le \sum_{k \ne n}^{} \abs{\frac{V'_{nk}V'_{kn}}{\Delta_{nk}}}+\sum_{k, l\ne n}^{} \abs{\frac{V'_{nk}V'_{kl}V'_{ln}}{\Delta_{nk}\Delta_{nl}}}
		+ \sum_{k, l, m\ne n} \abs{\frac{V'_{nk}V'_{kl}V'_{lm}V'_{mn}}{\Delta_{nk}\Delta_{nl}\Delta_{nm}}}+\cdots \cr
		&\le \sum_{k \ne n}^{} \abs{\frac{V'_{nk}}{\Delta_{nk}}}
		\left(\abs{V'_{nk}}+\sum_{l\ne n}^{} \frac{\abs{V'_{kl}V'_{ln}}}{\Delta}
		+ \sum_{l, m\ne n} \frac{\abs{V'_{kl}V'_{lm}V'_{mn}}}{\Delta^2}+\cdots\right)
		\end{align}
		then applying theorems \ref{gap} and \ref{higher order}, gives:
		\begin{equation}
		\le \sum_{k \ne n}^{} \abs{\frac{V'_{nk}}{\Delta_{nk}}}
		\left(\abs{e} + 
		\abs{e}^2 + \abs{e}^3 + \cdots\right) 
		= O\left(\frac{e^2}{\sqrt{N}}\right) = O\left(\frac{1}{N^{5/6}}\right)
		\end{equation}
		where we have used equation \eqref{bound_eq} from theorem \ref{bound}.
	\end{proof}
	The last theorem that we need, is a theorem for conservation of momentum.
	\begin{theorem}\label{conservation}
		(Conservation of Momentum) if we define the translation operator  $\hat{T}$ as:
		\begin{equation}
		\hat{T}:=T^1 \otimes T^2 \otimes T^\text{anc},
		\end{equation}
		where for each chain, the translation operator $T$ is defined as:
		\begin{equation}
		T\ket{a_0,a_1,...,a_{N-1}}=\ket{a_{N-1},a_0,...,a_{N-2}}, \qquad \text{where:} \quad a_i\in \{0,1\}
		\end{equation}
		then:
		\begin{enumerate}
			\item $[\hat{T},H]=0$, where $H=H_\text{free} + V^{1,\text{anc}}_\x+V^{2,anc}_\x$ is the total Hamiltonian of the system. \label{statement1}
			\item $\hat{T}\ket{\tilde{p},\Omega,+}=e^{-\frac{2\pi i}{N}p}\ket{\tilde{p},\Omega,+}$. \label{statement2}
			\item for every $p\ne q: \quad \bra{\tilde{q},\Omega,+}e^{iHt}\ket{\tilde{p},\Omega,+}=0$. \label{statement3}
		\end{enumerate}	
	\end{theorem}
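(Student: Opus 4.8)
The plan is to establish the three statements in sequence, since statement \ref{statement3} will follow from statements \ref{statement1} and \ref{statement2} by a short symmetry argument.

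First I would prove statement \ref{statement1}, the commutation $[\hat{T},H]=0$. The key fact is that conjugation by the simultaneous shift $\hat{T}=T^1\otimes T^2\otimes T^\text{anc}$ sends any single-site Pauli operator to the same operator one site over, i.e. $\hat{T}\,\x_j\,\hat{T}^\dagger=\x_{j+1}$ and likewise for $\y_j,\z_j$ (indices mod $N$). Every term of $H$ is a periodic sum over $j$ of local operators that are \emph{aligned at a common site index} across the chains: the free XY terms $\x_j\x_{j+1}+\y_j\y_{j+1}$ on each rail, the magnetic-field term $\mathbb{I}-\z_j$ on the ancilla, and the couplings $\frac{\mathbb{I}-\z^1_j}{2}\otimes\x^\text{anc}_j$ (and the analogous one for the second rail). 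Conjugating such a sum by $\hat{T}$ merely relabels $j\mapsto j+1$, which leaves it invariant by periodicity. Doing this term by term gives $\hat{T}H\hat{T}^\dagger=H$.

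Next I would prove statement \ref{statement2} by evaluating $\hat{T}$ on the tensor factors of $\ket{\tilde{p},\Omega,+}=\ket{\tilde{p}}^1\otimes\ket{\Omega}^{1'}\otimes\ket{+}^\text{anc}$. On the first rail, the convention $T\ket{x}=\ket{x+1}$ gives $T^1\ket{\tilde{p}}=\frac{1}{\sqrt{N}}\sum_x e^{2\pi ipx/N}\ket{x+1}=e^{-2\pi ip/N}\ket{\tilde{p}}$. The vacuum $\ket{\Omega}$ on the second rail is manifestly shift-invariant. On the ancilla, $\ket{+}=\tfrac{1}{\sqrt{2}}(\ket{\Omega}+\ket{\Psi})$, and both $\ket{\Omega}$ and the uniform superposition $\ket{\Psi}=\frac{1}{\sqrt{N}}\sum_j\ket{j}$---which is precisely the zero-momentum single-excitation state---are invariant under $T^\text{anc}$; hence $T^\text{anc}\ket{+}=\ket{+}$. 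Multiplying the three factors yields $\hat{T}\ket{\tilde{p},\Omega,+}=e^{-2\pi ip/N}\ket{\tilde{p},\Omega,+}$.

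Finally, statement \ref{statement3} follows by inserting $\hat{T}^\dagger\hat{T}=\mathbb{I}$ around the propagator and using $[\hat{T},e^{iHt}]=0$ (from statement \ref{statement1}): writing out $\bra{\tilde{q},\Omega,+}\hat{T}^\dagger e^{iHt}\hat{T}\ket{\tilde{p},\Omega,+}$ and applying statement \ref{statement2} to both the bra and the ket produces
\be
\bra{\tilde{q},\Omega,+}e^{iHt}\ket{\tilde{p},\Omega,+}=e^{2\pi i(q-p)/N}\,\bra{\tilde{q},\Omega,+}e^{iHt}\ket{\tilde{p},\Omega,+}.
\ee
For $p\ne q$ with $p,q\in\{0,\dots,N-1\}$ the phase $e^{2\pi i(q-p)/N}\ne1$, forcing the matrix element to vanish. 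The proof is essentially a symmetry-bookkeeping exercise; the only place demanding care is statement \ref{statement1}, where one must verify that \emph{every} interaction term---especially the rail-ancilla couplings, which are tensor products across distinct chains---is built from operators sharing a common site index, so that the single simultaneous shift $\hat{T}$ acts coherently on all three chains at once. I expect this verification, together with fixing the sign convention in $T\ket{x}=\ket{x+1}$ so that the phase in statement \ref{statement2} comes out as stated, to be the main (though routine) obstacle.
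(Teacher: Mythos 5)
Your proposal is correct and takes essentially the same route as the paper's proof: translation invariance of the aligned local terms gives $[\hat{T},H]=0$, factorwise evaluation on $\ket{\tilde{p}}\otimes\ket{\Omega}\otimes\ket{+}$ gives the eigenvalue $e^{-2\pi i p/N}$, and commuting $\hat{T}$ through $e^{iHt}$ forces the off-diagonal matrix element to vanish. Your insertion of $\hat{T}^\dagger\hat{T}$ around the propagator is algebraically identical to the paper's step of moving $\hat{T}$ across $e^{iHt}$, and your explicit check that $e^{2\pi i(q-p)/N}\neq 1$ for distinct $p,q\in\{0,\dots,N-1\}$ merely spells out what the paper leaves implicit.
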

	\begin{proof}
		\ref{statement1}: We can write the total Hamiltonian as:
		\begin{equation}
		H = \sum_{j=0}^{N-1} {\bf h}_{j,j+1}
		\end{equation}
		where each ${\bf h}_{j,j+1}$ is a sum of some local operators acting on the $j$-th and $(j+1)$-th sites of each chain. Then because of the translational symmetry of the system we have:
		\begin{equation}
		\hat{T}^\dagger \, {\bf h}_{j,j+1} \, T = {\bf h}_{j+1,j+2}
		\end{equation}
		Therefore, we have $\hat{T}^\dagger HT=H$, and since $\hat{T}$ is unitary we can conclude that $[\hat{T},H]=0$. \\
		\ref{statement2}: First note that we have:
		\begin{equation}
		T\ket{\Omega} = T \ket{0,0,...,0} = \ket{\Omega} \qquad \text{and} \qquad
		T^\text{anc}\ket{\Psi} = \frac{1}{\sqrt{N}} \sum_{x}^{} T^\text{anc} \ket{x}
		= \frac{1}{\sqrt{N}} \sum_{x}^{} \ket{x+1} = \ket{\Psi}
		\end{equation}
		Hence, we have $T\ket{+}=\ket{+}$, and therefore:
		\begin{align}
		\hat{T}\ket{\tilde{p},\Omega,+} &= \frac{1}{\sqrt{N}} \sum_{x}^{} e^{\frac{2 \pi i}{N}px} \, \hat{T} \ket{x}\otimes\ket{\Omega,+}
		= \frac{1}{\sqrt{N}} \sum_{x}^{} e^{\frac{2 \pi i}{N}px} \ket{x+1}\otimes\ket{\Omega,+}
		\cr
		&= \frac{1}{\sqrt{N}} \sum_{x}^{} e^{\frac{2 \pi i}{N}p(x-1)} \ket{x}\otimes\ket{\Omega,+} = e^{-\frac{2\pi i}{N}p}\ket{\tilde{p},\Omega,+}
		\end{align} \\
		\ref{statement3}: From \ref{statement1} we have $[\hat{T},e^{iHt}]=0$, therefore:
		\begin{equation}
		\bra{\tilde{q},\Omega,+}\hat{T}e^{iHt}\ket{\tilde{p},\Omega,+}=\bra{\tilde{q},\Omega,+}e^{iHt}\hat{T}\ket{\tilde{p},\Omega,+}
		\end{equation}
		and so from \ref{statement2} we have:
		\begin{equation}
		\left(e^{-\frac{2\pi i}{N}q}-e^{-\frac{2\pi i}{N}p}\right)\bra{\tilde{q},\Omega,+}e^{iHt}\ket{\tilde{p},\Omega,+}=0
		\end{equation}
		therefore, since $p\ne q$ we can conclude that $\bra{\tilde{q},\Omega,+}e^{iHt}\ket{\tilde{p},\Omega,+}=0$.
	\end{proof}
	
	Now we are ready to compute the error that we get by substituting the Hamiltonian by the effective Hamiltonian $\tilde{H}_0$, when performing a controlled phase operation. In our model a controlled phase operation is implemented by a gate block localized somewhere along the chains. The initial state of the system which In fact, is a superposition of Gaussian packets will enter this gate block, evolves with the Hamiltonian for the system $H$, and then exit this gate block. Now suppose that the initial state be $\ket{I}$, then after time $\Delta t$ when the wave packets exit this gate block the state of the system would be $e^{-itH}\ket{I}$. Therefore, in this process the error that we will obtain by the substitution is just $\norm{e^{-iH\Delta t}\ket{I}-e^{-i\tilde{H}_0\Delta t}\ket{I}}$. Hence, we have to show that this number is vanishingly small. Note that here we do not have to be worried about the state of other rails in the system. Because their ring Hamiltonians commute with the perturbation potential $V'$. Therefore, the unitary evolution of the system is separable and they evolve only through their ring Hamiltonians. \\	
	Because the initial state of the ancillary rail is $\ket{\Omega}$, then we can write the overall initial state of the two 1 rails and the ancillary rail as:
	\begin{equation}
	\ket{I} = \frac{\ket{I^+}+\ket{I^-}}{\sqrt{2}}, \qquad \text{where:} \quad \ket{I^\pm}=\ket{G,\Omega,\pm} = \sum_{p}^{} a_p \ket{\tilde{p},\Omega,\pm}
	\end{equation}
	Now take $\ket{n_p^0}=\ket{\tilde{p},\Omega,+}$, and its normalized perturbed ket to be $\ket{n_p}=\alpha\ket{n_p^0}+\beta\ket{\varphi_p}$ such that $\braket{n_p^0}{\varphi_p}=0$. Then we get that:
	\begin{equation}
		\left(e^{-iH\Delta t}-e^{-i\left(E_n^0+\Delta_n\right)\Delta t}\right)\ket{n_p}=0
	\end{equation}
	where $E_n^0+\Delta_n$ is the perturbed energy eigenstate. Then:
	\begin{equation}
	e^{-iH\Delta t}\ket{n_p^0} = 
	e^{-i\left(E_n^0+\Delta_n\right)\Delta t}\ket{n_p^0}-\frac{\beta}{\alpha}\left(e^{-iH\Delta t}-e^{-i\left(E_n^0+\Delta_n\right)\Delta t}\right)\ket{\varphi_p}
	\end{equation} 
	so according to corollary \ref{error}, $\beta = O\left(\frac{1}{N^{1/6}}\right)$ and we get that:
	\begin{align}
	\norm{e^{-iH\Delta t}\ket{n_p^0}-e^{-i\tilde{H}_0\Delta t}\ket{n_p^0}} &= \norm{e^{-iE_n^0\Delta t}\left(e^{-i\Delta_n\Delta t}-1\right)\ket{n_p^0}-\frac{\beta}{\alpha}\left(e^{-iH\Delta t}-e^{-i\left(E_n^0+\Delta_n\right)\Delta t}\right)\ket{\varphi_p}} \cr
	&\le \abs{e^{-i\Delta_n\Delta t}-1} + 2\abs{\frac{\beta}{\alpha}} 
	= O\left(\Delta_n\Delta t\right) + O\left(\frac{1}{N^{1/6}}\right) \cr
	&= O\left(\frac{1}{N^{5/6}}\Delta t+\frac{1}{N^{1/6}}\right) 
	= O\left(\frac{1}{N^{1/6}}\right)
	\end{align}
	where in the last line we have used the fact that $\Delta t=\Theta\left(N^{2/3}\right)$, because $\Delta t$ is twice the length of a gate block, since the group velocity of our wave packets is equal to $2$. Now given \ref{statement3} from theorem \ref{conservation}, since the vectors $\left\{\ket{n_p^0}\right\}$ have different momenta, the vectors $\left(e^{-iH\Delta t}-e^{-i\tilde{H}_0\Delta t}\right)\ket{n_p^0}$ are orthogonal, and consequently:
	\begin{equation}
	\norm{e^{-iH\Delta t}\ket{I^+}-e^{-i\tilde{H}_0\Delta t}\ket{I^+}}^2=
	\sum_{p}^{} {a_p}^2 \norm{e^{-iH\Delta t}\ket{n_p^0}-e^{-i\tilde{H}_0\Delta t}\ket{n_p^0}}^2 = O\left(\frac{1}{N^{2/6}}\right)
	\end{equation}
	since $\sum_{p}^{} {a_p}^2=1$. Similarly, we can obtain the same inequality for $\ket{I^-}$ as well. Finally, we obtain:
	\begin{align}
	\norm{e^{-iH\Delta t}\ket{I}-e^{-i\tilde{H}_0\Delta t}\ket{I}} &\le \frac{\norm{e^{-iH\Delta t}\ket{I^+}-e^{-i\tilde{H}_0\Delta t}\ket{I^+}}+\norm{e^{-iH\Delta t}\ket{I^-}-e^{-i\tilde{H}_0\Delta t}\ket{I^-}}}{\sqrt{2}} \cr
	&= O\left(\frac{1}{N^{1/6}}\right)
	\end{align}
	Hence, if we have $\frac{1}{m} > 2N^2$, then each entangling gate will only add a small error $O\left(\frac{1}{N^{1/6}}\right)$.	
		
\end{document}